\definecolor{colorRE}{HTML}{E76F51}
\definecolor{colorYE}{HTML}{E9C46A}
\definecolor{colorGR}{HTML}{2A9D8F}
\newcolumntype{C}[1]{>{\centering\let\newline\\\arraybackslash\hspace{0pt}}m{#1}}
\theoremstyle{definition}
\newtheorem{definition}{Definition}
\newtheorem{algorithm}{Algorithm}
\theoremstyle{plain}
\newtheorem{theorem}{Theorem}
\newtheorem{corollary}{Corollary}
\newtheorem{lemma}{Lemma}
\newcommand{\abs}[1]{\lvert #1 \rvert}
\newcommand{\norm}[1]{\lVert #1 \rVert}
\newcommand{\Abs}[1]{\left\lvert #1 \right\rvert}
\newcommand{\Norm}[1]{\left\lVert #1 \right\rVert}
\newcommand{\rest}{\mathrm{\textbf{R}}}
\newcommand{\trace}[1]{\mathrm{tr}\left(#1\right)}
\newcommand{\id}{\mathbb{I}}
\newcommand{\ee}{\mathrm{e}}
\newcommand{\ii}{\mathrm{i}}
\newcommand{\cc}{\mathbb{C}}
\newcommand{\EE}{\mathbb{E}}
\newcommand{\mL}{\mathcal{L}}
\newcommand{\mV}{\mathcal{V}}
\newcommand{\nn}{\boldsymbol{\nu}}
\newcommand{\bb}{\boldsymbol{b}}
\begin{document}

\title{Randomizing multi-product formulas for Hamiltonian simulation}

	\author{\href{https://orcid.org/0000-0002-8706-1732}{Paul\ K.\ Faehrmann}}
	\thanks{\href{mailto:paul.faehrmann@fu-berlin.de}{paul.faehrmann@fu-berlin.de}\newline P.K.F.~and
M.S.~have contributed equally.}
	\affiliation{Dahlem Center for Complex Quantum Systems, Freie Universität Berlin, 14195 Berlin, Germany}

	\author{\href{https://orcid.org/0000-0002-6419-302X}{Mark\ Steudtner}}
	\affiliation{Dahlem Center for Complex Quantum Systems, Freie Universität Berlin, 14195 Berlin, Germany}
    \author{\href{https://orcid.org/0000-0002-8291-648X}{Richard\ Kueng}}
	\affiliation{Institute for Integrated Circuits, Johannes Kepler University Linz, Austria}

	\author{\href{https://orcid.org/0000-0002-0749-8126}{Mária\ Kieferová}}
    \affiliation{Centre for Quantum Computation and Communication Technology,
Centre for Quantum Software and Information,
University of Technology Sydney,
NSW 2007, Australia}
	
	\author{\href{https://orcid.org/0000-0003-3033-1292}{Jens\ Eisert}}
	\affiliation{Dahlem Center for Complex Quantum Systems, Freie Universität Berlin, 14195 Berlin, Germany}
	\affiliation{Helmholtz-Zentrum Berlin f{\"u}r Materialien und Energie, Hahn-Meitner-Platz 1, 14109 Berlin, Germany}


\begin{abstract}
Quantum simulation, the simulation of quantum processes on quantum computers, suggests a path forward for the efficient simulation of problems in condensed-matter physics, quantum chemistry, and materials science. 
While the majority of quantum simulation algorithms are deterministic, a recent surge of ideas has shown that randomization can greatly benefit algorithmic performance. 
In this work, we introduce a scheme for quantum simulation that unites the advantages of randomized compiling on the one hand and higher-order multi-product formulas, as they are used for example in linear-combination-of-unitaries (LCU) algorithms or quantum error mitigation, on the other hand. In doing so, we propose a framework of randomized sampling that is expected to be useful for programmable quantum simulators and present two new multi-product formula algorithms tailored to it. 
Our framework reduces the circuit depth by circumventing the need for oblivious amplitude amplification required by the implementation of multi-product formulas using standard LCU methods, rendering it especially useful for early quantum computers used to estimate the dynamics of quantum systems instead of performing full-fledged quantum phase estimation. 
Our algorithms achieve a simulation error that shrinks exponentially with the circuit depth. To corroborate their functioning, we prove rigorous performance bounds as well as the concentration of the randomized sampling procedure. 
We demonstrate the functioning of the approach for several physically meaningful examples of Hamiltonians, including fermionic systems and the Sachdev–Ye–Kitaev model, for which the method provides a favorable scaling in the effort.
\end{abstract}
\maketitle

\section{Introduction}
The simulation of quantum processes on quantum computers is one of the most eagerly anticipated use cases for quantum computing. The ability to simulate a system's time evolution promises to provide insights into the dynamics of interacting quantum systems in situations where approximate classical simulation methods fail and constitutes one of the cornerstones of quantum technologies \cite{Roadmap}. 

This work aims at improving algorithms that simulate the dynamics of expectation values of observables. 
The need for developing such machinery stems from the observation that
state-of-the-art quantum devices and early quantum computers are still rather limited in their realizable circuit depths and control. We, therefore, assume only access to a quantum-oracle machine that implements single-qubit state preparation, controlled time evolution and quantum measurements and strive for minimizing the required depth of suitable quantum algorithms. Such a setting explicitly allows for the use of product formulas, which are the earliest algorithms proposed for the simulation of time-independent local Hamiltonians~\cite{lloydUniversalQuantumSimulators1996b}.

Such Trotter-Suzuki methods, as they are called, have evolved from comparably simple prescriptions for local Hamiltonians to sophisticated schemes able to capture more general sparse time-independent Hamiltonians~\cite{aharonovAdiabaticQuantumState2003, berry2007efficient, wiebe2010higher} as well as time-dependent Hamiltonians ~\cite{wiebe2011simulating, poulin2011quantum} and open quantum systems~\cite{Kliesch-PRL-2011,sweke2016digital}.
Despite their relative simplicity, product formulas are still at the forefront of Hamiltonian simulation techniques. A numerical study has shown that product formulas can in practice outperform more complex techniques~\cite{childs2018toward} and their complexity is better than initial estimates~\cite{childsTheoryTrotterError2020} suggest. In fact, product formulas are nearly optimal for lattice model simulations~\cite{childs2019nearly}.

Multi-product formulas introduced in the \emph{linear-combinations-of-unitaries (LCU)}~\cite{childsHamiltonianSimulationUsing} approach have been built upon previous results of Trotter-Suzuki methods and have recently been improved~\cite{lowWellconditionedMultiproductHamiltonian2019}. The
idea of linearly combining unitaries has led to quantum algorithms with an exponential speedup in precision~\cite{berry2015hamiltonian, berry2014exponential, berry2015simulating, lowHamiltonianSimulationQubitization2019b} which have been recently
gaining popularity. Besides quantum simulation in the LCU framework, the mathematical construct of multi-product formulas (see Fig.~\ref{fig:mpf_frameworks}) can also be used in quantum error mitigation \cite{endoHybridQuantumClassicalAlgorithms2021}.
Although these methods have optimal asymptotic error scaling, they inherently require deep quantum circuits for their implementation and are thus not suitable for applications in near-term devices or early quantum computers. It is therefore crucial to find algorithms that require shorter circuits and fewer digital gates. 

\begin{figure}[t!]
    \centering
    \begin{adjustbox}{max width=\linewidth}
        \begin{tikzpicture}[
        squarednode/.style={rectangle, rounded corners, draw=black, fill=black!0.01, very thick, minimum size=5mm, minimum width=1cm, align=center}]
        \node[squarednode]        (MPF)       {Multi-product formulas};
        \node[squarednode]      (QEC)        [below=.75cm of MPF.south]                      {Quantum error mitigation \cite{endoHybridQuantumClassicalAlgorithms2021}};
        \node[squarednode]      (LCU)       [left=.75cm of QEC] {Linear-combination-of-unitaries \cite{childsHamiltonianSimulationUsing,lowWellconditionedMultiproductHamiltonian2019}};
        \node[squarednode]        (RS)       [right=.75cm of QEC] {Randomized sampling [here]};
        
        \draw[-latex,very thick] (MPF.south) -- (QEC.north);
        \draw[-latex,very thick] (MPF.south) -- (RS.north);
        \draw[-latex,very thick] (MPF.south) -- (LCU.north);
    \end{tikzpicture}
    \end{adjustbox}
    \caption{Multi-product formulas, although introduced with use in the LCU framework in mind, can also be used in different frameworks, such as quantum error mitigation or randomized sampling, which is the main contribution of this work.}
    \label{fig:mpf_frameworks}
\end{figure}
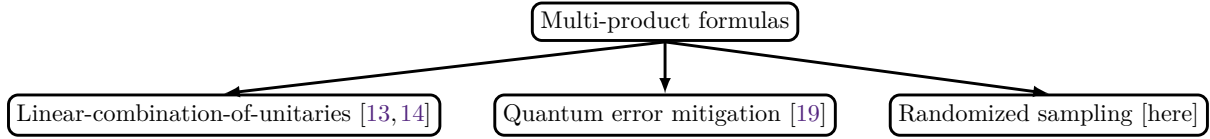

Recently, a new element has been introduced to aid this search: the element of \emph{classical randomness}. 
The idea of randomization in Hamiltonian simulation has heralded a renaissance of product formula methods~\cite{campbellShorterGateSequences2017b,campbellRandomCompilerFast2019a,childsFasterQuantumSimulation2019,ouyangCompilationStochasticHamiltonian2020,chenQuantumSimulationRandomized2020}.
For single steps of such techniques, a rigorous understanding has recently been reached \cite{chenQuantumSimulationRandomized2020}. 
Randomized algorithms can also be considered when one is only interested in estimating expectation values. For such a task we do not need to prepare the time-evolved state (from which the observable will be measured) perfectly.  
Instead, a lower-effort randomized algorithm can be used such that the correct expectation value is obtained only after averaging the measurement outcomes.  

The aims of this work are twofold: First, we strive for combining the advantages of higher-order multi-product formulas with those of schemes of randomized compiling, to create a novel framework in which multi-product formulas can be put to good use, which we dub ``randomized sampling''. 
Specifically, we adopt the above scenario of computing expectation values of observables and propose to sample product formulas from multi-product formulas, by which we circumvent the need to use the LCU framework and its corresponding methods such as block encodings. 
Instead, we implement multi-product formulas on average through random sampling. This results in a notable reduction in algorithmic depth at the expense of additional circuit evaluations. 
In this way, we see how notions of randomized compiling and higher-order multi-product formulas -- when suitably brought together -- allow for more resource-efficient notions of quantum simulation amenable to programmable devices and early quantum computers, with a focus on estimating the dynamics of quantum systems instead of a complete phase estimation procedure. An overview of this framework can be found in Figure~\ref{fig:sampling_algo}.
Second, we develop alternative multi-product formulas tailored to this new framework, which promise to outperform the accuracy of the multi-product formulas introduced by Childs and Wiebe in Ref.~\cite{childsHamiltonianSimulationUsing} in the regime of short simulation times.
We show that using a quantum device limited to the aforementioned operations can already yield improvements to fully analog approaches. It is therefore especially useful in the regime of early quantum computers, when algorithms with a pure focus on noisy-intermediate-scale-quantum devices \cite{preskillQuantumComputingNISQ2018a}, such as variational quantum algorithms, reach their limits, but fully digital algorithms on large, fault-tolerant quantum computers are not yet feasible. The steps required to end up with such a formula are summarized in Figure~\ref{fig:prescription}.
\begin{table}[t!]
    \centering
    \setlength{\tabcolsep}{2pt} 
    \renewcommand{\arraystretch}{2} 
    \begin{adjustbox}{max width=\textwidth}
    \begin{tabular}{C{2.95cm}|C{3.4cm}|C{3.15cm}|C{\linewidth-9.5cm}}
    Algorithm & \textbf{Trotter-Suzuki~\cite{suzukiGeneralTheoryFractal1991}}\linebreak(Section~\ref{sec:review_multi_product})  & \textbf{Childs-Wiebe~\cite{childsHamiltonianSimulationUsing}}\linebreak (Section~\ref{sec:review_multi_product}) & \textbf{Novel multi-product formulas [here]}\newline (Section~\ref{ssec:custom_mpfs})\\ \hline
    Formula &  $S_{2\chi}(t/r)^r$ & $\sum_{q=1}^{K+1}C_qS_{2\chi}(t/q)^q$ & 
          $\prod_{r=1}^R\sum_{q=0}^{2\chi R}C_q\!\left(\nn^{(r)}, \bb^{(r)}\right)\; S_{2\chi}\!\left(b_q^{(r)}t\right)$ \\ 
    Free parameters & $r,\,\chi$ & $K,\,\chi$ & $R,\chi,\,\lbrace\bb^{(r)}\rbrace_{r = 1}^{R}$\\
    Dependent \linebreak parameters & - & $\boldsymbol{C}(K,\chi)$ & $\left\lbrace\nn^{(r)}(R,\chi),\, \boldsymbol{C}\!\left(\nn^{(r)}, \bb^{(r)}\right)\right\rbrace_{r=1}^R$\\
    Max. depth & $2Lr\cdot5^{\chi-1}$ & $2L(K+1)\cdot5^{\chi-1}$ & $2LR\cdot5^{\chi-1}$ \\
    Query complexity per sample & $\mathcal{O}\left(\frac{\left(\Lambda t\right)^{1 + 1/(2\chi)}}{\varepsilon^{1/(2\chi)}}\cdot5^{\chi-1}\right) $&$ \mathcal{O}\left(\frac{\ln(\varepsilon)}{\ln(\Lambda t)}5^{\chi-1}\right)$ & $\mathcal{O}\left(\frac{\ln(\varepsilon)}{\ln(\Lambda t)}\frac{5^{\chi-1}}{\chi}\right)$\\
    Error scaling & $\mathcal{O}\left(r(\Lambda t/r)^{2\chi+1}\right)$ & $\mathcal{O}\left((\Lambda t)^{2(\chi+K)+1}\right)$ & $\mathcal{O}\left((\Lambda t)^{2\chi R+1}\right)$ \\ 
    Sampling overhead & $1$ & $\left(\sum_{q=1}^{K+1}\left|C_q\right|\right)^2$ & $\left(\prod_{r=1}^R\sum_{q=0}^{2\chi R}\Abs{C_q\!\left(\nn^{(r)}, \bb^{(r)}\!\right)}\right)^2$
    \end{tabular}
    \end{adjustbox}
    \caption{Comparison of the standard Trotter-Suzuki formula, the multi-product formula used by Childs and Wiebe and the closed-form/matching multi-product formula introduced in this work for approximating the time evolution operator $\exp(-\ii Ht)$ of a Hamiltonian $H=\sum_{k=1}^Lh_k$. We use $\Lambda = \sum_k\Norm{h_k}$ 
    and provide detailed error bounds in Section~\ref{sec:review_multi_product} and Theorem~\ref{theo:error_bound}.
    Each algorithm provides the choice of some free parameters, dictating their performance and resource requirements. 
    The multi-product formulas proposed by Childs and Wiebe in Ref.~\cite{childsHamiltonianSimulationUsing} and those introduced here further come with additional parameters depending on the choice of these free parameters.
    Since we aim for a randomized implementation using the framework introduced in Algorithm~\ref{algo:random_sampling}, the maximal circuit depths, query complexities, and error scalings are slightly different from their implementation in a deterministic, coherent fashion. Furthermore, implementing them in the randomized sampling framework comes with the stated sampling overheads. 
    Note that their errors can also exhibit a commutator scaling as discussed in Refs.~\cite{childsTheoryTrotterError2020,lowWellconditionedMultiproductHamiltonian2019} for all of these methods.}
    \label{tab:compare_methods}
\end{table}

A direct comparison between these newly developed multi-product formulas with those of Childs and Wiebe and Trotter-Suzuki product formulas is presented in Table~\ref{tab:compare_methods}.

The remainder of this work is organized as follows: After a short review of multi-product formulas in Section~\ref{sec:review_multi_product}, we present the framework of randomized sampling and a summary of our main results in Section~\ref{sec:randomized_sampling_main_res}. Section~\ref{sec:proofs} then contains a detailed analysis of the proposed randomized sampling framework and multi-product formulas and gives the proofs of the results discussed in the previous section. To conclude, we compare the performance of our formulas to that of Trotter-Suzuki product formulas and Childs and Wiebe type multi-product formulas \cite{childsHamiltonianSimulationUsing} in Section~\ref{sec:numerics}, discussing both their error bounds and actual performance on several physically plausible and interesting Hamiltonian models of strongly correlated quantum systems, for which we find a favorable performance over known schemes of quantum simulation.

\section{Multi-product formulas}
\label{sec:review_multi_product}
Multi-product formulas constitute the main ingredient of our work, and hence we will briefly review the underlying ideas, starting with product formulas. 
The goal of quantum simulations is to approximate the
quantum dynamics of a complex quantum system described by a many-body Hamiltonian decomposed as
\begin{equation}
    H = \sum_{k=1}^L h_k
\end{equation}
composed of Hermitian 
Hamiltonian 
terms $\{h_k\}$ of neither necessarily
small nor geometrically local support 
defined on a quantum lattice
equipped with a Hilbert space ${\cal H}= (\cc^d)^{\otimes n}$.

Here, $n$ is 
the number of degrees of freedom of finite local dimension $d$. To access the Hamiltonian, we assume to have oracles
$\mathbb{O}_k(t)$ implementing time evolutions under each term in the Hamiltonian
\begin{equation}
    \mathbb{O}_k(t) \ket{\psi} = e^{-\ii h_kt}\ket{\psi} \label{oracles}
\end{equation}
for any $k\in[1,L]$ and $t\in\mathbb{R}$. In digital quantum simulation, these oracles are built from Clifford gates and phase rotations, but we do not assume that such a decomposition is available to us, as the oracles could be implemented by the time evolution of a programmable device. We only require to have control over the implementation, i.e., that we can apply the oracle $\mathbb{O}_k(t)$ depending on the state of a subsystem encoded as a qubit as
\begin{align}
    \ket{0}\!\!\bra{0}\otimes \mathbb{I} + \ket{1}\!\!\bra{1}\otimes \mathbb{O}_k(t)  \, .
\end{align}
 Now, quantum simulation 
aims at making
reliable predictions of the expectation values of
observables $O$ (which in the ideal case are local with a small support
on the lattice, but again, this is 
not a
necessity) at later times $T>0$
\begin{equation}
    \langle O(T)\rangle\coloneqq{\rm tr}(U(T) \rho U^\dagger(T) O)
\end{equation}
for initial quantum states $\rho$, where $U$ is the time evolution operator. 
In practice, this commonly
means to establish ways to 
 approximate the corresponding time evolution operator
\begin{equation}
    U(T) = \ee^{-\ii H T} \, .
\end{equation}
Although it is worth stressing that 
a-priori information about the time $T$ and
properties of the initial state
$\rho$, as well as features of the locality of the underlying Hamiltonian $H$ can be exploited to come up with highly specialized approximations, we do not assume any underlying structure or limitations in this work.

A simple, yet effective approach to approximating the time evolution operator is that of using product formulas, which make use of $N$ consecutive evolutions of individual Hamiltonian terms $h_{k_j}$ by associated short time intervals $\alpha_j t$, defined with
real coefficients $\alpha_j$ such that $\sum_j \alpha_j = 1$. Partial backward evolutions are explicitly allowed, i.e., $\alpha_j<0$ for some $j$, even though $t>0$. In general, such a formula is defined as a product of time evolutions

\begin{equation} \label{eq:proform}
    \ee^{-\ii H t} \approx \prod_{j=1}^N \ee^{-\ii  \alpha_j \, h_{k_j} t} \, 
\end{equation}
where each single term evolution can be implemented via $\mathbb{O}_k(t)$.
Here, we distinguish between a direct evolution by time $t$ and a repeated evolution of short time slices $t/r<1$ such that 
\begin{equation}
\label{eq:reps}
    \ee^{-\ii H t} \approx \left(\prod_{j=1}^N \ee^{-\ii \alpha_j\,  h_{k_j} \frac{t}{r} }\right)^r \, .
\end{equation}
The most accurate known formulas of this type are \emph{Trotter-Suzuki formulas} \cite{suzukiGeneralTheoryFractal1991}. They are recursively defined for any positive integer $\chi$ 
and any time $t$ by
    \begin{gather}
        S_1(t)=\prod_{k=1}^L \ee^{-\ii h_k t} \, ,\\         \label{eq:trotter_suzuki_0}
        S_2(t)=\prod_{k=1}^L \ee^{-\ii h_k t/2}\prod_{k=L}^1 \ee^{-\ii h_k t/2}\, ,\\
        \label{eq:trotter_suzuki}
        S_{2\chi}(t)=\Big(S_{2\chi-2}(s_{2\chi-2}t)\Big)^2 S_{2\chi-2}\Big([1-4s_{2\chi-2}]t\Big) \Big(S_{2\chi-2}(s_{2\chi-2}t)\Big)^2
    \end{gather}
with $s_{2\chi} \coloneqq (4-4^{1/(2\chi+1)})^{-1}$ for any positive integer $\chi$. This specific choice of $s_{2\chi}$ ensures that the Taylor series of $S_{2\chi}(t)$ matches that of the actual time evolution $U(t)$ up to $\mathcal{O}(t^{2\chi+1})$, which makes it a good approximation for $t\ll 1$. 
It is important to stress that constructing an $\mathcal{O}(t^{2\chi+1})$ approximation requires the application of $2\cdot5^{\chi-1}(L-1)+1$ individual oracles $\mathbb{O}_k(t)$, where $L$ is again the number of Hamiltonian terms.  More generally, we would represent them by oracle calls and we will refer to the number of exponentials $N_{\rm exp}$.
Using repetitions as in Eq.~\eqref{eq:reps}, the number of exponentials of the algorithm is of $\mathcal{O}(r L 5^{\chi-1})$ to approximate the actual time evolution $U(T)$ up to an accuracy of $ \varepsilon \in \mathcal{O}(r(T/r)^{2\chi+1})$ in operator distance. 

To enhance the comparability of product formulas with more involved multi-product formulas whose bounds have not been improved to the same extent, we use the following simple tail bound and note that better bounds exist, e.g. via commutator relations \cite{childsTheoryTrotterError2020}:
\begin{equation}
\label{eq:trotter_bound}
    \Norm{U(T)-S_{2\chi}(T/r)^r}\leq 2r\frac{\left(g_\chi \Lambda T/r\right)^{2\chi+1}}{(2\chi+1)!}
\end{equation}
Here, $g_{\chi}= (5/3)^{\chi-1} 4\chi/3$ originates in the exponential tail of the Trotter-Suzuki terms \cite{childsHamiltonianSimulationUsing} and $\Lambda = \sum_{k=1}^L\Norm{h_k}$.

The exponential dependence of the circuit depth on $\chi$ can pose a challenge for real-world implementations, and 
especially so within the NISQ regime. Altogether, the number of oracle calls needed to simulate the evolution up to an error $\varepsilon \leq 1$ is upper bounded by
\begin{equation}
    \label{eq:trotter_oracles}
    N_{\rm oracle} \leq \frac{2L5^{\chi - 1} \left(L\Lambda t\right)^{1+1/(2\chi)}}{\varepsilon^{1/(2\chi)}},
\end{equation}
as proven in Ref.~\cite{berry2007efficient}. An alternative complexity bound using commutator relations of the individual Hamiltonian terms can be found in Ref.~\cite{childsTheoryTrotterError2020}.

A known technique to decrease the number of required short time evolutions, i.e., oracle calls, is the use of multi-product formulas \cite{blanesExtrapolationSymplecticIntegrators1999,chinMultiproductSplittingRungeKuttaNystrom2010}. 
While the Trotter-Suzuki approximation cancels erroneous contributions of higher-order terms by adding backward evolutions,  multi-product formulas achieve the same cancellations by superposing different product formulas. 
Conventionally,  one employs multi-product formulas that describe the same time evolution (up to a fixed order of $t$), but whose erroneous higher-order contributions are of different strength and can thus be made to cancel. This is achieved by approximating the time evolution to the same Trotter-Suzuki order, but considering product formulas different in the number of time slices \cite{blanesExtrapolationSymplecticIntegrators1999,yoshidaConstructionHigherOrder1990}.
The Childs and Wiebe multi-product formula discussed in Ref.~\cite{childsHamiltonianSimulationUsing} is
of the form
\begin{equation}
    \label{eq:multi_product_formula}
    M_{K,2\chi}(t) \coloneqq \sum_{q=1}^{K+1}C_qS_{2\chi}(t/\ell_q)^{\ell_q},
\end{equation}
where $K$ is an integer defining a cutoff and $\lbrace \ell_q \rbrace$ is a set of pairwise different integers.
The coefficients $\lbrace C_q \rbrace$ are determined via
\begin{equation}
    \label{eq:LCU_vandermonde}
   \left( \begin{matrix}
    1 & 1 & \,  & \cdots & \, & 1 \\ 
    \ell_1^{-2\chi} & \ell_2^{-2\chi} && \cdots && \ell_{K+1}^{-2\chi} \\ 
    \ell_1^{-2\chi-2} & \ell_2^{-2\chi-2} && \cdots && \ell_{K+1}^{-2\chi-2} \\ 
     \vdots & \vdots && \ddots && \vdots \\ 
    \ell_1^{-2(K+\chi-1)} & \ell_2^{-2(K+\chi-1)} && \cdots && \ell_{K+1}^{-2(K+\chi-1)}
    \end{matrix} \right) \left( \begin{matrix} C_1 \\ 
    C_2\\ 
    C_3 \\ 
    C_4 \\ 
    \vdots \\ 
    C_{K+1} \\
    \end{matrix} \right) = \left( \begin{matrix} 1 \\ 
    0\\ 
    0 \\  
    0 \\ 
    \vdots \\ 
    0 
    \end{matrix} \right),
\end{equation}
ensuring the error terms in the multi-product formula vanish up to $\mathcal{O}(t^{2(K+\chi)})$, resulting in
\begin{align}
    \label{eq:childs-wiebe_bound}
    \Norm{U(t)-M_{K,2\chi}(t)}\leq&\left(1+g_\chi^{2(\chi+ K)+1}\sum\limits_{q=1}^{K+1}\left|C_q\right|\right)\frac{\left(\Lambda t \right)^{2(\chi+ K)+1}}{(2(\chi+K)+1)!}\\=&\mathcal{O}\left((\Lambda t)^{2(K+\chi)+1}\right),\nonumber
\end{align}
where we have again stated a simple tail bound for comparability and note that it could also exhibit a commutator scaling as discussed in Refs.~\cite{childsTheoryTrotterError2020,lowWellconditionedMultiproductHamiltonian2019}.

Unlike Childs and Wiebe, we will henceforth use the simplest version achieved by setting $\ell_q=q$ for all $q$ since this choice is the most favorable for a randomized implementation, which will become clearer in Section~\ref{sec:randomized_sampling_main_res}.

In Definitions~\ref{def:our_lcu_v1} and \ref{def:our_lcu_v2} and Section~\ref{sec:proofs}, we employ a different approach for  multi-product formulas and develop two techniques whose errors scale with $\mathcal{O}(t^{2\chi R+1})$, and where $R$ is comparable to $K+1$.

Multi-product formulas were firstly used for quantum simulation by Childs and Wiebe \cite{childsHamiltonianSimulationUsing}, who developed the \emph{linear-combinations-of-unitaries (LCU)} approach to directly implement multi-product formulas on a quantum system.
Note that sums of unitaries are not inherently physical operations since the unitary group is not closed under addition. Childs and Wiebe use a non-deterministic approach to implement multi-product formulas that can lead to large algorithmic depth. Berry et al.~\cite{berry2015hamiltonian} have implemented  LCU for a truncated Taylor series nearly perfectly but their use of oblivious amplitude amplification requires an additional register and a complex state preparation procedure.
Recently, Ref.~\cite{lowWellconditionedMultiproductHamiltonian2019} has improved the condition number of multi-product formulas and thus extended the use of LCU by amplitude amplification. However, these improvements have no impact on their asymptotic performance and are less favorable for randomized implementations.

Using \emph{randomized sampling} such as proposed in Algorithm~\ref{algo:random_sampling}, we can circumvent the need for these potentially deep circuits. Randomized compiling of multi-product formulas entails sampling among the individual terms of Eq.~\eqref{eq:multi_product_formula} requiring the implementation of just twice the number of oracle calls, rather than the entire linear combination and the overhead for its implementation.

\section{Randomized sampling using multi-product formulas}
\label{sec:randomized_sampling_main_res}
LCU-type algorithms, despite having a finite failure probability, are deterministic when successfully applied. In contrast to such deterministic implementations, there is a recent interest in randomized algorithms \cite{campbellRandomCompilerFast2019a,childsFasterQuantumSimulation2019,ouyangCompilationStochasticHamiltonian2020,chenQuantumSimulationRandomized2020}.
These novel results improve the performance of product formulas by randomizing the order of the short time evolution and introducing the alternative setting of randomized compiling.
With the goal of reducing the circuit depth required for the implementation of multi-product formulas, we build upon randomized compiling and propose the setting of randomized sampling: Given an observable $O$ and a quantum system in the initial state $\rho$, our goal is to find the expectation value of the observable after a time evolution $U=\ee^{-\ii Ht}$ governed by the Hamiltonian $H=\sum_{k=1}^L h_k$. That is, we wish to compute
\begin{equation}
    \label{eq:goal}
    \langle O(t)\rangle = {\rm tr}(O U\rho U^\dagger) \, .
\end{equation}
In the following, we describe a randomized algorithm for such a task, give convergence guarantees, and present novel multi-product formulas suited for this framework. Note that we will require one part of our system to be encoded as a single qubit, while the rest acts as the simulator.

\subsection{The randomized sampling framework}
\label{sec:rand_samp_algo}

We begin by proposing the randomized sampling framework summarized in Figure \ref{fig:sampling_algo}. After rescaling the coefficients of a given multi-product formula, we can apply the following simple algorithm to estimate Eq.~\eqref{eq:goal} in a randomized fashion based upon importance sampling:

\begin{algorithm}[Randomized sampling]
    \label{algo:random_sampling}
    Given an observable $O$ and an ensemble $\mV=\set{(V_k,p_k)}_{k=1}^M$ of $M$ unitaries $\lbrace V_k \rbrace$ and corresponding probabilities $\lbrace p_k \rbrace$, we consider $N$ independent repetitions of the following protocol:
    \begin{enumerate}
        \item Prepare $\ket{+}\!\!\bra{+}\otimes \rho \, $.
        \item Sample $V_\circ\stackrel{i.i.d.}{\sim}\mV$ and apply the anti-controlled unitary $\ket{0}\!\!\bra{0}\otimes V_\circ +\ket{1}\!\!\bra{1}\otimes \id \, $.
        \item Sample $V_\bullet\stackrel{i.i.d.}{\sim}\mV$ and apply the controlled unitary $\ket{0}\!\!\bra{0}\otimes\id +\ket{1}\!\!\bra{1}\otimes V_\bullet \,$.
        \item Perform a single shot of the POVM measurement associated with the observable $X\otimes O \, $.
        \item Store the measurement outcomes $\{o_j\} $.
    \end{enumerate}
\end{algorithm}
\begin{figure}[t!]
    \centering
    \begin{adjustbox}{max width=\linewidth}
    \input{figures/sampling_algo.tex}
    \end{adjustbox}
    \caption{Overview of the randomized sampling framework for estimating the dynamics of observables. After preparing a multi-product formula for importance sampling, we can run Algorithm~\ref{algo:random_sampling} to implement it in a simple, randomized fashion. While this overview already hints toward its use for time evolution and estimating the dynamics of observables, these results apply to general multi-product formulas.}
    \label{fig:sampling_algo}
\end{figure}
Note that it is in general sufficient to measure $X$ and $O$ separately and multiply the outcomes, rather than performing a joint measurement. This could be useful e.g. when the measurement of $O$ is available in a black-box fashion. As we show in more detail in Section~\ref{sec:proofs}, we find the following result for infinitely many runs of the algorithm:

\begin{corollary}[Sample mean convergence]
    \label{coro:sampling_expectation}
    Let $O$ be an observable, $\rho$ be an initial state and let $\mV = \set{(V_k,p_k)}_{k=1}^M$ be an ensemble of unitaries $\set{V_i}$ and corresponding probabilities $\set{p_i}$, such that 
    \begin{equation}
        \label{eq:lcu1}
        \EE_\mV[V_k] = \sum_{k=1}^Mp_kV_k = V \, .
    \end{equation}
    Then, the sample mean of Algorithm~\ref{algo:random_sampling} converges to the expectation value of the random measurement outcomes $\lbrace o_j \rbrace_j$, given by
    \begin{equation}
        \lim_{N \to \infty} \frac{1}{N}\sum_{j=1}^No_j = \trace{OV\rho V^\dagger} \, .
    \end{equation}
\end{corollary}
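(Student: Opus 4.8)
The plan is to reduce the claim to a single-shot expectation computation and then invoke the strong law of large numbers. Since the $N$ repetitions of Algorithm~\ref{algo:random_sampling} are i.i.d., and each stored outcome $o_j$ is an eigenvalue of the bounded observable $X\otimes O$ with $\norm{X\otimes O}=\norm{O}\leq 1$, the sample mean $\frac1N\sum_j o_j$ converges almost surely to $\EE[o_1]$, where the expectation is taken over the two independent draws $V_\circ,V_\bullet\sim\mV$ together with the intrinsic measurement randomness. It therefore suffices to establish $\EE[o_1]=\trace{OV\rho V^\dagger}$.

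First I would condition on fixed unitaries $V_\circ,V_\bullet$ and propagate the state through steps 1--3. Writing $A=\proj{0}\otimes V_\circ+\proj{1}\otimes\id$ for the anti-controlled gate and $B=\proj{0}\otimes\id+\proj{1}\otimes V_\bullet$ for the controlled gate, one checks $BA=\proj{0}\otimes V_\circ+\proj{1}\otimes V_\bullet=:\sum_{c}\proj{c}\otimes W_c$ with $W_0=V_\circ$, $W_1=V_\bullet$. Expanding $\proj{+}=\tfrac12\sum_{a,b\in\{0,1\}}\ket{a}\bra{b}$ and using $(BA)\big(\ket{a}\bra{b}\otimes\rho\big)(BA)^\dagger=\ket{a}\bra{b}\otimes W_a\rho W_b^\dagger$, the pre-measurement state is $\sigma_{V_\circ,V_\bullet}=\tfrac12\sum_{a,b}\ket{a}\bra{b}\otimes W_a\rho W_b^\dagger$.

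Next I would compute the conditional mean outcome as $\trace{(X\otimes O)\,\sigma_{V_\circ,V_\bullet}}=\tfrac12\sum_{a,b}\bra{b}X\ket{a}\,\trace{OW_a\rho W_b^\dagger}$. Because $\bra{b}X\ket{a}$ vanishes unless $a\neq b$, the ancilla trace isolates exactly the coherence terms, leaving $\tfrac12\big(\trace{OV_\circ\rho V_\bullet^\dagger}+\trace{OV_\bullet\rho V_\circ^\dagger}\big)$. Averaging over the independent draws and using linearity of the trace together with hypothesis~\eqref{eq:lcu1} and its adjoint, $\EE_\mV[\trace{OV_\circ\rho V_\bullet^\dagger}]=\trace{O\,\EE_\mV[V_\circ]\,\rho\,\EE_\mV[V_\bullet]^\dagger}=\trace{OV\rho V^\dagger}$, and likewise for the second term, so $\EE[o_1]=\trace{OV\rho V^\dagger}$; combining with the first paragraph closes the argument.

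The calculation itself is essentially bookkeeping, so the "hard part" is conceptual rather than technical: the point to get right is that the control and the anti-control are fed \emph{independently} sampled unitaries, so that the expectation of the coherence term factorizes as $\EE_\mV[V_\circ]$ acting from the left and $\EE_\mV[V_\bullet]^\dagger$ from the right, reconstructing the conjugation $V\rho V^\dagger$ rather than the (physically unavailable) mixture $\EE_\mV[V_k\rho V_k^\dagger]$. One should also be careful to note that it is precisely the interference read out by the $X$ measurement that turns a sum of unitaries into a genuine two-sided conjugation, which is why a Hadamard-test-type circuit with two fresh samples is the correct primitive and why the ancilla must be initialized in $\ket{+}$ and measured in $X$.
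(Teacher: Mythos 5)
Your proposal is correct and follows essentially the same route as the paper: propagate $\ket{+}\!\!\bra{+}\otimes\rho$ through the anti-controlled and controlled gates to get the block state $\tfrac12\sum_{a,b}\ket{a}\bra{b}\otimes W_a\rho W_b^\dagger$, read off the coherence terms via the $X\otimes O$ measurement to obtain $\tfrac12\trace{O(V_\circ\rho V_\bullet^\dagger+V_\bullet\rho V_\circ^\dagger)}$, and use the independence of the two draws to factorize the expectation into $\trace{OV\rho V^\dagger}$. The only (welcome) addition is that you make the final convergence step explicit via the strong law of large numbers, which the paper leaves implicit.
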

Furthermore, since an infinite number of measurements is infeasible, we give an expression for the number $N$ of repetitions of Algorithm~\ref{algo:random_sampling} sufficient for achieving the desired accuracy and confidence for the goal of randomized sampling. Using Hoeffding's inequality \cite{hoeffdingProbabilityInequalitiesSums1963}, we can formulate the following result:

\begin{theorem}[Randomized implementation of sums of unitaries]
\label{theo:randomized_implementation_unitaries}
Let $O$ be an observable, $\rho$ an initial state and let $\mV = \set{(V_k,p_k)}_{k=1}^M$ be an ensemble of unitaries $\set{V_k}$ and corresponding probabilities $\set{p_k}$, such that 
    \begin{equation}
        \label{eq:lcu2}
        \EE_\mV[V_k] = \sum_{k=1}^Mp_kV_k = V \, .
    \end{equation}
Then, for a fixed accuracy $\varepsilon\in(0,1)$ and confidence $\delta\in(0,1)$, a total of
\begin{equation}
       N\geq \frac{2\lVert O\rVert^2 \ln(2/\delta)}{\epsilon^2}
\end{equation}
repetitions of Algorithm~\ref{algo:random_sampling} suffice to accurately approximate the expectation value $\trace{OV\rho V^\dagger}$ using the sampling mean estimator, i.e.,
\begin{equation}
    \Abs{\frac{1}{N}\sum_{j=1}^No_j-\trace{OV\rho V^\dagger}}\leq \varepsilon
\end{equation}
with probability at least $1-\delta$.
\end{theorem}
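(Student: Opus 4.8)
The plan is to reduce the statement, via Corollary~\ref{coro:sampling_expectation}, to a single invocation of Hoeffding's inequality. First I would establish that the outcomes $\{o_j\}_{j=1}^N$ produced by the $N$ runs of Algorithm~\ref{algo:random_sampling} are i.i.d.\ with mean $\mu := \trace{OV\rho V^\dagger}$. To see this, propagate the circuit of Fig.~\ref{fig:random_sampling_circuit} for \emph{fixed} $V_\circ,V_\bullet$: starting from $\proj{+}\otimes\rho$, the anti-controlled $V_\circ$ followed by the controlled $V_\bullet$ produces $\tfrac12\sum_{a,b\in\{0,1\}}\ket{a}\!\bra{b}\otimes U_a\rho U_b^\dagger$ with $U_0=V_\circ$ and $U_1=V_\bullet$. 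Evaluating $\trace{(X\otimes O)\,\cdot\,}$ on this state, only the off-diagonal ancilla blocks survive (since $\bra{b}X\ket{a}=1-\delta_{ab}$ for $a,b\in\{0,1\}$), leaving the conditional expectation $\tfrac12\big(\trace{OV_\circ\rho V_\bullet^\dagger}+\trace{OV_\bullet\rho V_\circ^\dagger}\big)=\mathrm{Re}\,\trace{OV_\bullet\rho V_\circ^\dagger}$, the last equality using that $O$ and $\rho$ are Hermitian and the cyclicity of the trace. Taking the expectation over the \emph{independent} draws $V_\circ,V_\bullet\stackrel{i.i.d.}{\sim}\mV$ and using hypothesis~\eqref{eq:lcu2}, $\EE_\mV[V_\circ]=\EE_\mV[V_\bullet]=V$, the expectation factorizes and yields $\EE[o_j]=\mathrm{Re}\,\trace{OV\rho V^\dagger}=\mu$, the last step because $\trace{OV\rho V^\dagger}=\trace{(V^\dagger OV)\rho}$ is the trace of a product of Hermitian operators and hence real. (This is exactly the content of Corollary~\ref{coro:sampling_expectation} once combined with the law of large numbers.)

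Second, I would record that the outcomes are bounded: step~4 measures the observable $X\otimes O$ with $\norm{X\otimes O}=\norm{O}\le1$, so $o_j\in[-1,1]$ almost surely. Thus the $o_j$ are i.i.d.\ random variables supported on an interval of width $2$, with common mean $\mu$.

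Third, I apply Hoeffding's inequality~\cite{hoeffdingProbabilityInequalitiesSums1963} to the sample mean $\bar o_N:=\tfrac1N\sum_{j=1}^No_j$:
\begin{equation}
    \mathrm{Pr}\!\left[\Abs{\bar o_N-\mu}\ge\varepsilon\right]\le 2\exp\!\left(-\frac{2N^2\varepsilon^2}{\sum_{j=1}^N2^2}\right)=2\exp\!\left(-\frac{N\varepsilon^2}{2}\right).
\end{equation}
Demanding the right-hand side to be at most $\delta$ is equivalent to $N\ge 2\log(2/\delta)/\varepsilon^2$, which is precisely the claimed threshold; for any such $N$ we conclude $\Abs{\bar o_N-\trace{OV\rho V^\dagger}}\le\varepsilon$ with probability at least $1-\delta$, as desired.

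The only step that needs genuine care is the circuit computation in the first paragraph: correctly tracking the anti-controlled and controlled gates and verifying that, after the ancilla $X$-measurement, the surviving terms assemble into $\mathrm{Re}\,\trace{OV_\bullet\rho V_\circ^\dagger}$, a bilinear expression in $V_\circ$ and $V_\bullet$ that factorizes over the two independent draws. It is exactly this factorization that lets $\EE_\mV$ act separately on each draw and makes the linearity hypothesis~\eqref{eq:lcu2} do its work; everything downstream is the textbook Hoeffding estimate plus one line of algebra to invert the exponential tail bound. For a general mixed $\rho$ one may expand $\rho$ over pure states and extend by linearity, so no loss of generality is incurred by the density-matrix manipulations above.
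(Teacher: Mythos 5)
Your proof is correct and follows essentially the same route as the paper: the circuit propagation yielding the conditional expectation $\tfrac12\trace{O(V_\circ\rho V_\bullet^\dagger+V_\bullet\rho V_\circ^\dagger)}$, factorization over the independent draws to get mean $\trace{OV\rho V^\dagger}$, boundedness of the outcomes in $[-1,1]$, and a direct application of Hoeffding's inequality with the same tail bound $2\exp(-N\varepsilon^2/2)$. The identification of the conditional mean as $\mathrm{Re}\,\trace{OV_\bullet\rho V_\circ^\dagger}$ is a nice explicit touch but does not change the argument.
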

So far, this framework is fairly general: Given a set $\mathcal{V}$, the algorithm samples from the linear transform with the unitary $V$. For quantum simulation, $V$ (a decent approximation to $U$) and $\mathcal{V}$ still have to be found.  This is where multi-product formulas come into play.  Multi-product formulas approximate $U$ by a weighted sum of product formulas, and so the featured product formulas and their weights could make up the set $\mathcal{V}$. The problem is that we require $\lbrace p_k \rbrace$ to form a probability distribution, i{.} e{.}, $p_k>0$ and $\sum_{k=1}^M p_k=1$. This disqualifies us from identifying the sets $\lbrace p_k \rbrace$ with $\lbrace C_q \rbrace$ straightforwardly.
While $\sum_{q}C_q =1$ holds for any multi-product formula per construction (see the first component of Eq.~\eqref{eq:LCU_vandermonde}), the sign of some $C_q$ will be negative \cite{shengSolvingLinearPartial1989}. We thus have to absorb the signs of these negative $C_q$  into their unitaries $V_q$, but then we find that the operation is not normalized as $\sum_{q}\abs{C_q}>1$. At this point we introduce the quantity $\Xi \coloneqq \sum_q |C_q| $, that we call \textit{resolution factor}. The resolution is practically used to define a probability distribution via $p_k = \left|C_k\right|/\Xi$ such that  Algorithm~\ref{algo:random_sampling} samples from
\begin{equation}
\label{eq:measurement_resolution}
\frac{1}{\Xi^2}\mathrm{tr}( O U\rho U^\dagger) \, .
\end{equation} 
As the time evolution is now approximated by $\Xi V$ rather than $V$, $\Xi$ factors into the number of required circuit evaluations. With $\Xi>1$ requiring to run the algorithm more often to achieve an error comparable with the situation where $\Xi=1$, this resolution factor can be regarded as a penalty.
This penalty is further amplified if a direct time evolution by time $t$ is insufficient and a repeated time evolution of short time slices $t/r<1$ is required. 
Since each of such repetitions will result in additional factors of $\Xi$ leading to an exponential scaling of the resolution factor in the number of repetitions, this behavior prohibits the use of a large number of repetitions as is usual for product formulas.
On the other hand, this randomized implementation does not rely on post-processing and the increase of some success probability via well-conditioning as is the case for the deterministic, block-encoding implementation, which in turn would increase the number of required circuit evaluations drastically.
However, especially in the regime of early quantum computers with capabilities between those of NISQ devices and large fault-tolerant quantum computers, where one would use few repetitions, a deeper circuit may pose a bottleneck much tighter than a slightly higher number of circuit evaluations would. We, therefore, envision these results to be of particular interest in this intermediate regime, where one might be interested in simulating the dynamics of quantum systems instead of performing full-fledged quantum phase estimation.

We can make the following guarantees for approximating time evolutions in randomized sampling with a nontrivial resolution  and a finite number of measurements:  

\begin{theorem}[Approximating unitaries with a finite number of measurements,]
\label{theo:time_evolution}
    Let $U$ be a unitary, $O$ an observable, $\rho$ some initial state and let $\mV = \set{(V_k,p_k)}_{k=1}^M$ be an ensemble of unitaries $\set{V_k}$ with a corresponding probability distribution $\set{p_k}$ such that
    $\EE_\mV[V_k] =\sum_{k=1}^Mp_kV_k= V$.  Let there be a constant factor $\Xi \in \mathbb{R}_+$ such that 
    \begin{equation}
        \label{eq:approx_time_evolution}
        \norm{\Xi V-U}\leq \varepsilon \, .
    \end{equation}
    Then, it is sufficient to run
    \begin{equation}
    N\geq  \frac{2 \norm{O}^2\ln \left(2/\delta\right) \Xi^4}{\varepsilon^2}
    \end{equation}
    repetitions of Algorithm~\ref{algo:random_sampling} to achieve
    \begin{equation}
        \Abs{\frac{\Xi^2}{N}\sum_{j=1}^No_j-\mathrm{tr}(OU\rho U^\dagger)}\leq \left(1+3\norm{O}\right)\varepsilon\, ,
    \end{equation}
    with probability at least $1-\delta$.
\end{theorem}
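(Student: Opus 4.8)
The plan is to split the estimation error into a \emph{statistical} fluctuation plus a \emph{systematic} bias and to balance the two against the tolerance $\varepsilon/3$ furnished by the hypothesis, using the single-Hoeffding concentration argument already behind Theorem~\ref{theo:randomized_implementation_unitaries}.

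First I would make precise what the sample mean estimates. Because the resolution factor enters quadratically (cf.\ Eq.~\eqref{eq:measurement_resolution}), the natural estimator of $\trace{OU\rho U^\dagger}$ is the \emph{rescaled} sample mean $\widehat\mu_N\coloneqq\frac{\Xi^2}{N}\sum_{j=1}^{N}o_j$ (with $o_j$ the raw circuit outcomes); its single-shot building block $\Xi^2 o_j$ takes values in $[-\Xi^2,\Xi^2]$, since the POVM for $X\otimes O$ has outcomes in $[-1,1]$ (as $\Norm{O}\le 1$), and by Corollary~\ref{coro:sampling_expectation} — equivalently, by linearity of expectation over the two i.i.d.\ draws $V_\circ,V_\bullet\sim\mV$ in the circuit of Fig.~\ref{fig:random_sampling_circuit} — it has expectation $\Xi^2\trace{OV\rho V^\dagger}=\trace{OW\rho W^\dagger}$ with $W\coloneqq\Xi V$. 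Hence
\begin{equation}
  \Abs{\widehat\mu_N-\trace{OU\rho U^\dagger}}\;\le\;\underbrace{\Abs{\widehat\mu_N-\trace{OW\rho W^\dagger}}}_{\text{statistical}}\;+\;\underbrace{\Abs{\trace{OW\rho W^\dagger}-\trace{OU\rho U^\dagger}}}_{\text{systematic}}.
\end{equation}

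For the systematic term I would only need elementary operator inequalities. Since $V=\sum_k p_kV_k$ is a convex combination of unitaries, $\Norm{V}\le 1$ and hence $\Norm{W}\le\Xi$; combining this with $\Norm{U}=1$, $\Norm{\rho}_1=1$, the triangle inequality $\Norm{W\rho W^\dagger-U\rho U^\dagger}_1\le\Norm{(W-U)\rho W^\dagger}_1+\Norm{U\rho(W^\dagger-U^\dagger)}_1$, the trace-norm bound $\Norm{ABC}_1\le\Norm{A}\,\Norm{B}_1\,\Norm{C}$, Hölder's inequality $\abs{\trace{OA}}\le\Norm{O}\,\Norm{A}_1$, and the hypothesis $\Norm{W-U}\le\varepsilon/3$, one obtains $\Abs{\trace{OW\rho W^\dagger}-\trace{OU\rho U^\dagger}}\le\Norm{W-U}\big(\Norm{W}+\Norm{U}\big)\le(1+\Xi)\varepsilon/3$.

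For the statistical term I would apply Hoeffding's inequality \cite{hoeffdingProbabilityInequalitiesSums1963} to the $N$ independent copies of $\Xi^2 o_j\in[-\Xi^2,\Xi^2]$: for any $t>0$, $\Pr\big[\Abs{\widehat\mu_N-\trace{OW\rho W^\dagger}}\ge t\big]\le 2\exp\!\big(-Nt^2/(2\Xi^4)\big)$, so the choice $t=\Xi\varepsilon/2$ together with $N\ge 8\ln(2/\delta)(\Xi/\varepsilon)^2$ forces the right-hand side below $\delta$. On the complementary event of probability at least $1-\delta$ the two bounds add to $\Xi\varepsilon/2+(1+\Xi)\varepsilon/3\le(1+\Xi)\varepsilon$, which is exactly the claim. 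I do not expect a genuine obstacle here; the only point that needs care is the bookkeeping around the $\Xi^2$ rescaling of the estimator — this is what turns the innocuous $[-1,1]$ Hoeffding range into $[-\Xi^2,\Xi^2]$ and hence simultaneously produces the $\Xi^2$ blow-up in the sample count and the extra factor $\Xi$ in the final accuracy — together with choosing how to split the $\varepsilon/3$ tolerance against the $\Xi\varepsilon/2$ statistical tolerance so that the constants collapse to those stated.
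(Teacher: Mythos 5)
Your proposal is correct and follows essentially the same route as the paper: a triangle-inequality split of the (rescaled) sample mean's deviation into a Hoeffding-controlled statistical term and a systematic bias controlled by $\norm{\Xi V-U}\le\varepsilon/3$ (the paper packages the latter as Lemma~\ref{lem:operator_diamond}, with a marginally different algebraic decomposition of $W\rho W^\dagger-U\rho U^\dagger$, and the former as an invocation of Theorem~\ref{theo:randomized_implementation_unitaries} with accuracy $\varepsilon/\Xi$). You also correctly make explicit the $\Xi^{2}$ rescaling of the estimator, which the theorem statement itself leaves implicit but the paper's proof uses.
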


\subsection{Custom multi-product formulas}
\label{ssec:custom_mpfs}
The multi-product formulas of Childs and Wiebe can easily be adapted to this randomized sampling scheme. Since previous improvements of multi-product formulas focused on increasing success probabilities by changing their condition number \cite{lowWellconditionedMultiproductHamiltonian2019}, which has no impact on their asymptotic error scaling, we can ignore constraints encountered in LCU techniques. Our sole interest lies in a low algorithmic depth and a moderate resolution, so we choose to set $\ell_q = q$ for all 
$q=1, \dots, K+1$ in Eq.~\eqref{eq:multi_product_formula} and \eqref{eq:LCU_vandermonde}. 
However, those multi-product formulas make relatively little use of the Trotter-Suzuki order $2\chi$ of its components. In fact, the only reason not to minimize the depth by using  $S_2(\cdot)$ blocks is the high resolution factor.
To this end, we present a family of novel multi-product formulas tailored toward the randomized sampling framework with improved scaling in the Trotter-Suzuki order that can be optimized for their resolution factor. A brief overview of the process leading to a suitable multi-product formula is presented in Figure~\ref{fig:prescription}
\begin{figure}[t!]
    \centering
    \begin{adjustbox}{max width=\linewidth}
    \begin{tikzpicture}[squarednode/.style={rectangle, rounded corners, draw=black, fill=black!0.01, very thick, minimum size=5mm, minimum width=1cm, align=center}
]

\node[squarednode] (matching) [minimum width=5.65cm] {Use the matching multi-product formula\\ introduced in Definition \ref{def:our_lcu_v1}};
\node[squarednode] (closedform) [right=.75cm of matching,minimum width=5.6cm] {Use the closed-form multi-product formula\\introduced in Definition \ref{def:our_lcu_v2}};
\node[squarednode] (attempt) [above = 1.5cm of $(matching)!0.5!(closedform)$]{Attempt solving the system of nonlinear equations defined in Eq.~\eqref{eq:constraint}};
\node[rectangle] (belowmatching)[below = 0.75cm of matching]{};
\node[rectangle] (belowcf)[below = 0.75cm of closedform]{};
\node[squarednode] (optimizing) [below = 0.75cm of $(matching.south)!0.5!(closedform.south)$,minimum width=15.97cm, text width=15.9cm,,align=flush left]{Numerically optimize the parameters $\boldsymbol{b}$ using a loss function balancing error bound (see Theorem \ref{theo:error_bound}) and resolution factor $\Xi=\sum_q|C_q|$ impacting the number of circuit evaluations.

This involves inverting the Vandermonde matrix defined in Eq.~\eqref{eq:vandermonde}};
\node[squarednode] (output) [below=0.75cm of optimizing] {Obtain a multi-product formula with $\sum_{q} C_q V_q\approx \mathrm{e}^{\mathrm{-\mathrm{i} Ht}}$};
\node[squarednode] (randomize) [draw=colorYE, fill=colorYE!5, below=0.75cm of output] {Implement MPF using the newly proposed \emph{randomized sampling} framework (Section~\ref{sec:rand_samp_algo} and Figure \ref{fig:sampling_algo})};
\draw[-latex, very thick] (attempt.south)-- node [pos=0.65,above, sloped] {\textcolor{colorGR}{Success}} (matching.north);
\draw[-latex, very thick] (attempt.south)-- node [pos=0.65,above, sloped] {\textcolor{colorRE}{Failure}} (closedform.north);
\draw[-latex, very thick] (matching.south) -- (belowmatching.north);
\draw[-latex, very thick] (closedform.south) -- (belowcf.north);
\draw[-latex, very thick] (optimizing.south) -- (output.north);
\draw[-latex, very thick] (output.south) -- (randomize.north);
\end{tikzpicture}
    \end{adjustbox}
    \caption{Overview of the steps necessary to arrive at a suitable multi-product formula as discussed in the following section. While the matching multi-product formula is superior to the closed-form version, the required solution to this system of nonlinear equations might not always be available}
    \label{fig:prescription}
\end{figure}
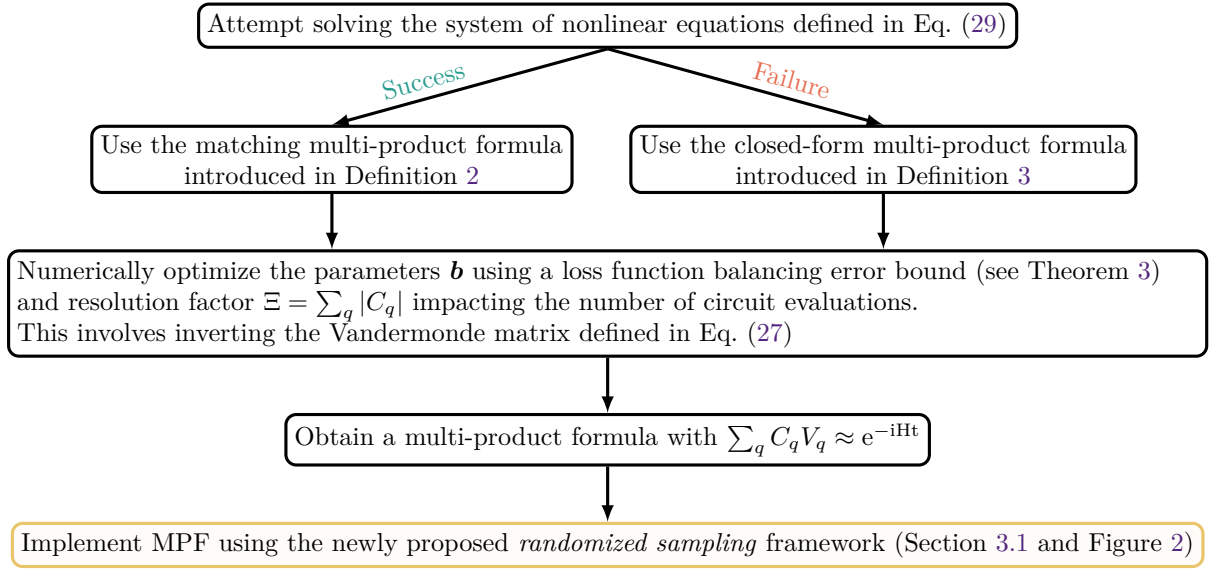

While Childs and Wiebe's approach manipulates higher-order error terms, we employ a construction to modulate the entire Taylor expansion. 

\begin{definition}[Linear combination of time evolution operators]
\label{def:new_multi_prod_base}
Let $\chi\geq1$ and $R\geq1$ be integers and $S_{2\chi}(t)$ a Trotter-Suzuki product formula approximation to $U(t)$ as defined in Eq.~\eqref{eq:trotter_suzuki} with $\Norm{S_{2\chi}(t)-U(t)}\in \mathcal{O}(t^{2\chi+1})$. Then, for any $t\in\mathbb{R}$ and arbitrary $\bb = (b_0,\ldots, b_{2\chi R})^\top\in \mathbb{R}^{2\chi R +1}$ and $\nn = (\nu_0,\ldots, \nu_{2\chi R})^\top \in \mathbb{R}^{2\chi R +1}$ , we define the multi-product formula $\mL_{2\chi,R}(\nn,\bb, t)$ as
\begin{equation}
    \label{eq:L_definition}
    \mL_{2\chi,R}(\nn,\bb, t) \coloneqq \sum_{q=0}^{2\chi R}C_q(\nn, \bb) \, S_{2\chi}\!\left(b_qt\right),
\end{equation}
where
\begin{equation}
    \label{eq:vandermonde}
   \left( \begin{matrix}
    1 & 1 & \,  & \cdots & \, & 1 \\
    b_0 & b_1 && \cdots && b_{2\chi R} \\
    b_0^2 & b_1^2 && \cdots && b_{2\chi R}^2 \\ 
    b_0^3 & b_1^3 && \cdots && b_{2\chi R}^3 \\
     \vdots & \vdots && \ddots && \vdots \\
    \, b_0^{2\chi R} & \, b_1^{2\chi R} && \cdots && \, b_{2\chi R}^{2\chi R} 
    \end{matrix} \right) \left( \begin{matrix} C_0 \\
    C_1\\
    C_2 \\ 
    C_3 \\
    \vdots \\
    C_{2\chi R} 
    \end{matrix} \right) =  \left( \begin{matrix} \nu_0 \\ \nu_1 \\ \nu_2 \\ \nu_3 \\ \vdots \\ \nu_{2\chi R} \end{matrix}\right)
\end{equation}
for some  $(C_0, \, C_1, \, \dots \, , \, C_{2\chi R})^\top \in \mathbb{R}^{2\chi R +1}$.
\end{definition}
Here, the coefficients $\lbrace C_q \rbrace$ are related to the parameters $\bb$ and $\nn$ via the inverse of the corresponding Vandermonde matrix in Eq.~\eqref{eq:vandermonde} as is discussed in detail in Section~\ref{sec:proofs}. The parameters $\nn$ will have an important role in the following construction, whereas $\bb$ can be tuned to strike a balance between the resolution factor $ \Xi $, error bound $\varepsilon$, and condition number. While previous approaches focused on optimizing $\bb$ for the condition number relevant for the success probability when using block-encodings and amplitude amplification, we instead numerically optimize them for the resolution factor and obtain improved error bounds along the way. Note that a similar construction has recently been used to approximate energy measurements in the fully analog setting \cite{bespalova2020hamiltonian}. We can now turn toward the definition of the first new multi-product formula. 

\begin{definition}[Matching multi-product formula]
\label{def:our_lcu_v1}
Let $\chi\geq1$ and $R\geq1$ be integers and $\mL_{2\chi,R}(\nn,\bb, t)$ as defined in Definition~\ref{def:new_multi_prod_base}. 
Then, for any $t\in\mathbb{R}$, we define the multi-product formula $\widetilde{M}_{2\chi,R}^{(\mathrm{m})}(t)$ as
\begin{equation}
    \widetilde{M}_{2\chi,R}^{(\mathrm{m})}(t)\coloneqq \prod_{r=1}^R\mL_{2\chi,R}\left(\nn^{(r)},\bb^{(r)},t\right)
\end{equation}
with $\nu_k^{(r)} = 0$ for $k> 2\chi$ and all $ r = 1 \dots R$. The remaining $\left\lbrace \nu_q^{(r)} \right\rbrace$ are fixed by
\begin{equation}
\label{eq:constraint}
\sum_{k_1+k_2+\ldots+k_R =k} \frac{\nu_{k_1}^{(1)}\nu_{k_2}^{(2)}\cdots \nu_{k_R}^{(R)} }{k_1!\,k_2! \, \cdots \, k_R!}=\frac{1}{k!}
\end{equation}
for all $0 \leq k \leq 2\chi R$, whereas the parameters $\lbrace \boldsymbol{b}^{(r)} \rbrace$ can be chosen arbitrarily.
\end{definition}
Multiplying the coefficients $C_q$ of the individual building blocks $\mL_{2\chi,R}(\nn^{(r)} ,\bb^{(r)}, t)$ of the matching multi-product formula results in a resolution factor of
\begin{equation}
\Xi^{(\mathrm{m})}\coloneqq\prod_r\left(\sum_q\Abs{C_q\!\left(\nn^{(r)}, \bb^{(r)}\!\right)}\right).
\end{equation}
For the time evolution to work, the set of vectors $\left\lbrace \nn^{(r)} \right\rbrace$ has to be found satisfying the constraint in Eq.~\eqref{eq:constraint} for different $R$ and $2\chi$. It is important to note that obtaining the coefficients $\nn$ for the matching multi-product formula requires solving the system of nonlinear equations defined by Eq.~\eqref{eq:constraint}, whose complexity scales at least exponentially in $\chi R$. In this work, we have used numerical solvers which converge in a few seconds for $\chi R\leq 10$, which we view as reasonably high as they provide an approximation up to $\mathcal{O}(t^{21})$, which we view as suitable for early quantum computers.

When solving the system of nonlinear equations becomes unfeasible, be it due to slow convergence or the desire for significantly better approximations, we can employ a second version of the multi-product formula in which the $\left\lbrace \nn^{(r)} \right\rbrace$ are already determined at the cost of a slightly larger resolution factor $\Xi$:
\begin{definition}[Closed-form multi-product formula]
\label{def:our_lcu_v2}
Let $\chi\geq1$ and $R\geq1$ be integers and $\mL_{2\chi,R}(\nn,\bb, t)$ as defined in Definition~\ref{def:new_multi_prod_base}. Then for any $t\in\mathbb{R}$, we define the multi-product formula $\widetilde{M}_{2\chi,R}^{(\mathrm{cf})}(t)$ as
\begin{equation}
    \widetilde{M}_{2\chi,R}^{(\mathrm{cf})}(t)\coloneqq \sum_{r=1}^R\left(\mL_{2\chi,R}\!\left(\nn^{(0)},\bb^{(0)}, t\right)\right)^{r-1}\mL_{2\chi,R}\!\left(\nn^{(r)},\bb^{(r)}, t\right)
\end{equation}
with
\begin{align}
    \nu^{(0)}_k &= \begin{cases}
    1, & \text{for} \; k =  2\chi\\
    0, & \text{else},
    \end{cases}\\
    \nu^{(1)}_k &= \begin{cases}
    1, & \text{for} \; k\leq 2\chi\\
    0, & \text{else},
    \end{cases}\\
    \nu^{(n)}_k &= \begin{cases}
    \frac{k!\left((2\chi)!\right)^{n-1}}{(2\chi (n-1) + k)!}, & \text{for}\; 0 < k\leq 2\chi\\
    0, & \text{else},
    \end{cases}
\end{align}
for all $\; 0\leq k \leq 2\chi R \;$ and $\; 1 < n \leq R $. The parameters $\lbrace \boldsymbol{b}^{(r)} \rbrace$ can be chosen arbitrarily.
\end{definition}
Multiplying the coefficients of the individual $\mL_{2\chi,R}(\nn^{(r)} ,\bb^{(r)}, t)$, results in a resolution factor for the closed-form multi-product formula of
\begin{equation}
    \Xi^{(\mathrm{cf})}:=\sum_r \left(\sum_q \Abs{C_q\!\left(\nn^{(0)},\bb^{(0)}\!\right)}\right)^{r-1}\left(\sum_q \Abs{C_q\!\left(\nn^{(r)},\bb^{(r)}\!\right)}\right).
\end{equation}

For these multi-product formulas, we can prove the following two results:
\begin{corollary}[Low depth unitary approximation]
\label{cor:error_scaling}
Let $\widetilde{M}_{2\chi,R}(t)$ be a multi-product formula constructed according to Definition~\ref{def:our_lcu_v1} or \ref{def:our_lcu_v2}. We then find that
\begin{equation}
    \Norm{U(t)-\widetilde{M}_{2\chi,R}(t)} =  \mathcal{O}\left(t^{2\chi R+1}\right).
\end{equation}
\end{corollary}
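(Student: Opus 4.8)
The plan is to Taylor-expand every building block $\mL_{2\chi,R}(\nn,\bb,t)$ in $t$ and reduce the corollary to an elementary identity about truncated exponential generating functions; throughout, $\mathcal{O}(\cdot)$ is measured in operator norm as $t\to0$ with $\chi$, $R$, $H$ and the chosen parameters held fixed. Since $S_{2\chi}(s)$ is a finite product of matrix exponentials it is an entire operator-valued function of $s$; writing $S_{2\chi}(s)=\sum_{m=0}^{2\chi R}A_m s^m+\mathcal{O}(s^{2\chi R+1})$, the defining bound $\Norm{S_{2\chi}(s)-U(s)}=\mathcal{O}(s^{2\chi+1})$ forces $A_m=(-\ii H)^m/m!$ for all $0\le m\le2\chi$. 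The coefficients $C_q=C_q(\nn,\bb)$ that solve Eq.~\eqref{eq:vandermonde} do not depend on $t$ and satisfy $\sum_q C_q b_q^m=\nu_m$ for $0\le m\le2\chi R$, so
\[
  \mL_{2\chi,R}(\nn,\bb,t)=\sum_q C_q\,S_{2\chi}(b_q t)=\sum_{m=0}^{2\chi R}\nu_m\,A_m\,t^m+\mathcal{O}(t^{2\chi R+1}).
\]
In every parameter choice occurring in Definitions~\ref{def:our_lcu_v1} and~\ref{def:our_lcu_v2} one has $\nu_m=0$ for $2\chi<m\le2\chi R$ --- precisely the range on which $A_m$ need not equal $(-\ii H)^m/m!$ --- so each block equals $f_{\nn}(-\ii Ht)+\mathcal{O}(t^{2\chi R+1})$ with $f_{\nn}(x):=\sum_{m=0}^{2\chi}\nu_m x^m/m!$, a polynomial of degree at most $2\chi$ with constant term $\nu_0$.

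Next I would handle the error bookkeeping for the two constructions. For $\widetilde{M}_{2\chi,R}^{(\mathrm{m})}$, writing the $r$-th factor as $P_r+E_r$ with $P_r=f_{\nn^{(r)}}(-\ii Ht)$ and $E_r=\mathcal{O}(t^{2\chi R+1})$, the expansion of $\prod_{r=1}^R(P_r+E_r)$ is $\prod_r P_r$ plus finitely many terms each carrying at least one factor $E_r$ and otherwise only bounded operators (for $\abs t\le1$), hence each $\mathcal{O}(t^{2\chi R+1})$. For $\widetilde{M}_{2\chi,R}^{(\mathrm{cf})}$ one additionally uses $\nu^{(0)}_0=0$: by the previous paragraph $\mL_{2\chi,R}(\nn^{(0)},\bb^{(0)},t)=\frac{(-\ii Ht)^{2\chi}}{(2\chi)!}+\mathcal{O}(t^{2\chi R+1})$ has a Taylor series beginning only at order $t^{2\chi}$, so $\big(\mL_{2\chi,R}(\nn^{(0)},\bb^{(0)},t)\big)^{r-1}=\frac{(-\ii Ht)^{2\chi(r-1)}}{((2\chi)!)^{r-1}}+\mathcal{O}(t^{2\chi R+1})$, since every term of the binomial expansion other than the pure leading power contains a factor $\mathcal{O}(t^{2\chi R+1})$ and otherwise only non-negative powers of $t$. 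Multiplying this by $\mL_{2\chi,R}(\nn^{(r)},\bb^{(r)},t)=f_{\nn^{(r)}}(-\ii Ht)+\mathcal{O}(t^{2\chi R+1})$ and summing over $r$ leaves an overall error $\mathcal{O}(t^{2\chi R+1})$.

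It then remains to identify the surviving polynomial with the truncated exponential $\sum_{j=0}^{2\chi R}(-\ii Ht)^j/j!$, after which the claim follows by comparison with the Taylor bound $U(t)=\ee^{-\ii Ht}=\sum_{j=0}^{2\chi R}(-\ii Ht)^j/j!+\mathcal{O}(t^{2\chi R+1})$. For $\widetilde{M}_{2\chi,R}^{(\mathrm{m})}$ this is exactly Eq.~\eqref{eq:constraint}: the coefficient of $(-\ii Ht)^k$ in $\prod_{r=1}^R f_{\nn^{(r)}}(-\ii Ht)$ equals $\sum_{k_1+\dots+k_R=k}\prod_r\nu^{(r)}_{k_r}/k_r!=1/k!$ for every $0\le k\le2\chi R$, giving $\prod_r f_{\nn^{(r)}}(x)=\sum_{k=0}^{2\chi R}x^k/k!+\mathcal{O}(x^{2\chi R+1})$. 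For $\widetilde{M}_{2\chi,R}^{(\mathrm{cf})}$, inserting the explicit vectors $\nn^{(r)}$ of Definition~\ref{def:our_lcu_v2} shows that the $r$-th summand contributes $\frac{(-\ii Ht)^{2\chi(r-1)}}{((2\chi)!)^{r-1}}f_{\nn^{(r)}}(-\ii Ht)=\sum_{j=2\chi(r-1)+1}^{2\chi r}(-\ii Ht)^j/j!$ for $r\ge2$ and $\sum_{j=0}^{2\chi}(-\ii Ht)^j/j!$ for $r=1$; these index ranges tile $\{0,1,\dots,2\chi R\}$ without overlap, so the sum telescopes to the desired truncated exponential.

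The conceptual content is light: for the matching formula the crucial combinatorial identity is supplied as Eq.~\eqref{eq:constraint}, and for the closed-form formula it reduces to a short telescoping check. The step that needs genuine care --- and the one I expect to be the main (though still modest) obstacle --- is the error bookkeeping for $\widetilde{M}_{2\chi,R}^{(\mathrm{cf})}$: one must verify that raising the ``low'' block $\mL_{2\chi,R}(\nn^{(0)},\bb^{(0)},\cdot)$, whose expansion begins only at order $t^{2\chi}$, to the power $r-1\le R-1$ does not lift its $\mathcal{O}(t^{2\chi R+1})$ remainder to an order below $2\chi R+1$, which is exactly where the choice $\nu^{(0)}_0=0$ is essential.
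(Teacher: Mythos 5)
Your proposal is correct and follows essentially the same route as the paper: Taylor-expand each block $\mL_{2\chi,R}(\nn,\bb,t)$ using the Vandermonde relation to replace the coefficients of $t^m$ by $\nu_m$, kill the uncontrolled orders $2\chi<m\le 2\chi R$ by the choice $\nu_m=0$, and then match the surviving polynomial to the truncated exponential via Eq.~\eqref{eq:constraint} for the matching formula and via the telescoping over the blocks $2\chi(r-1)<j\le 2\chi r$ for the closed-form formula. The paper's argument (Eqs.~\eqref{eq:scaling1} and~\eqref{eq:scaling2}) is the same coefficient-matching computation; your only addition is the explicit remainder bookkeeping for products of blocks, which the paper leaves implicit.
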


\begin{theorem}[Error bound for custom multi-product formulas]
\label{theo:error_bound}
For $\widetilde{M}_{2\chi,R}(t)$ being a multi-product formula constructed according to Definitions~\ref{def:our_lcu_v1} or \ref{def:our_lcu_v2}, we have
\begin{equation}
    \Norm{U(t)-\widetilde{M}_{2\chi,R}(t)} \leq \left(1 + \zeta g_{\chi}^{2\chi R+1}\right) \frac{\left( \Lambda t \right)^{2\chi R+1}}{\left(2\chi R+1\right)!} \,, 
\end{equation}
with $g_{\chi}\ \coloneqq \frac{4\chi}{3}\left(\frac{5}{3}\right)^{\chi-1}$, $\Lambda \coloneqq \sum\limits_k\Norm{h_k}$ and corresponding
\begin{align}
     \zeta^{(\mathrm{m})} \coloneqq& \sum_{q_1=0}^{2\chi R}\sum_{q_2=0}^{2\chi R}\cdots \sum_{q_r=0}^{2\chi R} \Abs{C_{q_1}^{(1)}C_{q_2}^{(2)} \cdots  C_{q_r}^{(R)}}\left( \abs{b_{q_1}^{(1)}}+\ldots +\abs{b_{q_r}^{(R)}}\right)^{2\chi R+1}\\
     \leq& \Xi^{(\mathrm{m})}\left(R\max_{q,r}\abs{b_q^{(r)}}\right)^{2\chi R}\\
    \zeta^{(\mathrm{cf})}\coloneqq& \sum_{r=1}^R\sum_{q_1=0}^{2\chi R}\sum_{q_2=0}^{2\chi R}\cdots \sum_{q_r=0}^{2\chi R} \Abs{C_{q_1}^{(0)}C_{q_2}^{(0)} \cdots  C_{q_{r-1}}^{(0)}C_{q_{r}}^{(r)}}\left( \abs{b_{q_1}^{(0)}}+\ldots+\abs{b_{q_{r-1}}^{(0)}} +\abs{b_{q_r}^{(r)}}\right)^{2\chi R+1}\\
    \leq& \Xi^{(\mathrm{cf})}\left(R\max_{q,r}\abs{b_q^{(r)}}\right)^{2\chi R},
\end{align}
for matching and closed-form multi-product formulas respectively.
\end{theorem}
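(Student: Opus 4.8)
\medskip
\noindent\textbf{Proof proposal.} The statement upgrades the order estimate of Corollary~\ref{cor:error_scaling} to an explicit constant, so the plan is a Taylor‑remainder argument routed through an \emph{idealized} version of the formula. I describe the matching case; the closed‑form case is handled identically after expanding the powers $\big(\mL_{2\chi,R}(\nn^{(0)},\bb^{(0)},t)\big)^{r-1}$, which only changes the bookkeeping and produces $\zeta^{(\mathrm{cf})}$ in place of $\zeta^{(\mathrm{m})}$. Write $\mL_r(t):=\mL_{2\chi,R}(\nn^{(r)},\bb^{(r)},t)=\sum_q C^{(r)}_q S_{2\chi}(b^{(r)}_q t)$ and let $\widehat{\mL}_r(t):=\sum_q C^{(r)}_q U(b^{(r)}_q t)$ be the same expression with every Suzuki factor replaced by the exact evolution. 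Since exact evolutions commute, $\prod_r\widehat{\mL}_r(t)=\sum_{q_1,\dots,q_R}\big(\prod_r C^{(r)}_{q_r}\big) U\!\big((\textstyle\sum_r b^{(r)}_{q_r})\,t\big)$, and the constraint~\eqref{eq:constraint} (exactly as in the proof of Corollary~\ref{cor:error_scaling}) forces this to agree with $U(t)$ through order $2\chi R$. I then split
\[
 U(t)-\widetilde{M}_{2\chi,R}^{(\mathrm{m})}(t)=\underbrace{\Big[U(t)-\textstyle\prod_{r}\widehat{\mL}_r(t)\Big]}_{\text{(I)}}+\underbrace{\Big[\textstyle\prod_{r}\widehat{\mL}_r(t)-\textstyle\prod_{r}\mL_r(t)\Big]}_{\text{(II)}},
\]
and bound the brackets separately: (I) supplies the summand $1$, and (II) the summand $\zeta g_\chi^{2\chi R+1}$.

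For (I) no Suzuki error is present, so it is pure Taylor arithmetic. The difference vanishes to order $2\chi R$ at $t=0$, hence Taylor's theorem with integral remainder gives $\Norm{\text{(I)}}\le\frac{\abs t^{2\chi R+1}}{(2\chi R+1)!}\sup_{s\in[0,\abs t]}\Norm{\partial_s^{2\chi R+1}(\,\cdot\,)}$, and since $\partial_s^{m}U(s)=(-\ii H)^m U(s)$ and likewise for the finite superposition of exact evolutions, the remaining supremum is at most $\Lambda^{2\chi R+1}$ by subadditivity of the operator norm (using $\Norm H\le\Lambda$ and $\abs{b^{(r)}_q}\le1$). This reproduces $(\Lambda t)^{2\chi R+1}/(2\chi R+1)!$.

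The content is in (II). The crucial observation is that $\rho_r(t):=\mL_r(t)-\widehat{\mL}_r(t)=\sum_q C^{(r)}_q\big(S_{2\chi}(b^{(r)}_q t)-U(b^{(r)}_q t)\big)$ is itself $\mathcal O(t^{2\chi R+1})$: the single‑block error $S_{2\chi}(s)-U(s)$ starts at order $2\chi+1$, and the Vandermonde conditions~\eqref{eq:vandermonde} with $\nu^{(r)}_k=0$ for $k>2\chi$ annihilate all of its contributions of orders up to $2\chi R$. Hence $\text{(II)}=\prod_r\widehat{\mL}_r-\prod_r(\widehat{\mL}_r+\rho_r)$ telescopes into terms each carrying at least one factor $\rho_r$; bounding every $\widehat{\mL}_r$ in norm by $\sum_q|C^{(r)}_q|$, the dominant single‑$\rho$ part is at most $\sum_{r'}\big(\prod_{r\neq r'}\sum_q|C^{(r)}_q|\big)\Norm{\rho_{r'}(t)}$, with the multi‑$\rho$ terms of strictly higher order in $t$. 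To bound $\Norm{\rho_{r'}(t)}$ I would again apply Taylor's theorem with integral remainder, estimating the $(2\chi R+1)$‑st derivative through the elementary inequality $\Norm{\partial_t^{m}\prod_a e^{-\ii c_a h_{k_a}t}}\le\big(\sum_a|c_a|\Norm{h_{k_a}}\big)^{m}$ (expand each exponential, multinomial theorem) applied to the explicit product form of a Suzuki block; with $\Gamma_\chi:=\sum_j|\gamma^{(\chi)}_j|\Norm{h_{k_j}}$ the $\ell^1$‑weighted sum of its Trotter--Suzuki coefficients $\{\gamma^{(\chi)}_j\}$, and using $\abs{b^{(r)}_q}\le1$ to trade the power $2\chi R+1$ on the lengths for a linear dependence, this bounds $\Norm{\rho_{r'}(t)}$ by an absolute constant times $\frac{(\Gamma_\chi\abs t)^{2\chi R+1}}{(2\chi R+1)!}\sum_q|C^{(r')}_q|\,|b^{(r')}_q|$. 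Summing over $r'$ recombines the coefficient weights into exactly $\zeta^{(\mathrm{m})}$, and the last input is $\Gamma_\chi\le g_\chi\Lambda$ with $g_\chi=\tfrac{4\chi}{5}(5/3)^{\chi-1}$, proved by induction on $\chi$ using the recursion~\eqref{eq:trotter_suzuki}, base case $\Gamma_1=\Lambda$, with each level multiplying $\Gamma$ by $4\abs{s}+\abs{1-4s}=8s-1$ for the corresponding $s=s_{2j}\in(\tfrac14,\tfrac12)$. Combining (I) and (II) yields the stated inequality.

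The step I expect to be the main obstacle is (II): one must (a) verify that the matching (respectively closed‑form) constraints genuinely push each $\rho_r$ all the way to order $2\chi R+1$ so that the telescoping is not wasteful, (b) dispose of the subleading multi‑$\rho$ contributions, and (c) carry out the induction that earns the precise $g_\chi$ — a short but fiddly computation with $s_{2\chi}=(4-4^{1/(2\chi+1)})^{-1}$ — while also invoking the normalization $\abs{b^{(r)}_q}\le1$ to collapse the naturally occurring power $2\chi R+1$ on the lengths down to the linear dependence recorded in $\zeta$. The closed‑form case costs one extra layer of telescoping across the $r-1$ copies of $\mL_{2\chi,R}(\nn^{(0)},\bb^{(0)},\cdot)$, which is exactly what turns $\zeta^{(\mathrm{m})}$ into $\zeta^{(\mathrm{cf})}$.
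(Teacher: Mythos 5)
Your decomposition through the idealized formula $\widehat{\mL}_r(t)=\sum_q C_q^{(r)} U\bigl(b_q^{(r)}t\bigr)$ is genuinely different from the paper's, and your observation that $\rho_r=\mL_r-\widehat{\mL}_r$ is pushed all the way to order $2\chi R+1$ by the Vandermonde conditions is correct (it is essentially the content of Corollary~\ref{cor:error_scaling}). However, the proposal has a gap that breaks the stated constant: bracket (I) is \emph{not} bounded by $(\Lambda t)^{2\chi R+1}/(2\chi R+1)!$. Indeed $\prod_r\widehat{\mL}_r(t)=\sum_{q_1,\dots,q_R}\bigl(\prod_r C^{(r)}_{q_r}\bigr)U\bigl((\sum_r b^{(r)}_{q_r})t\bigr)$, so the supremum of its $(2\chi R+1)$-st derivative is controlled only by $\sum_{q_1,\dots,q_R}\bigl|\prod_r C^{(r)}_{q_r}\bigr|\,\bigl|\sum_r b^{(r)}_{q_r}\bigr|^{2\chi R+1}\Lambda^{2\chi R+1}$, a quantity that carries the resolution factor $\prod_r\sum_q|C_q^{(r)}|>1$ and a high power of the $b$'s; it is not $\leq\Lambda^{2\chi R+1}$ even if $|b_q^{(r)}|\leq1$ — and that normalization is nowhere assumed (the $\bb^{(r)}$ are arbitrary; the paper's numerics use $\bb=(1,-1,2,-2,\ldots,7)$). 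In the paper's proof the summand $1$ comes from $\Norm{\rest_{2\chi R}(\ee^{-\ii tH})}$ \emph{alone}: because $U$ and $\widetilde{M}_{2\chi,R}$ share the same degree-$2\chi R$ Taylor polynomial, the difference is split as $\Norm{\rest_{2\chi R}(U)}+\Norm{\rest_{2\chi R}(\widetilde{M}_{2\chi,R})}$, and the entire tail of the formula — your (I)'s idealized tail and your (II)'s Suzuki error together — is absorbed into the second remainder, which is bounded by expanding the multi-product by multilinearity and applying the $1$-norm remainder lemma $\Norm{\rest_\ell(\prod_j \ee^{-\ii\alpha_j h_{k_j}t})}\leq(\sum_j\Norm{\alpha_j h_{k_j}}t)^{\ell+1}/(\ell+1)!$ to each product of Suzuki blocks; that is what produces $\zeta g_\chi^{2\chi R+1}$.

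Two further points. First, discarding the multi-$\rho$ terms of your telescoping in (II) as ``strictly higher order'' is not admissible: the theorem is a non-asymptotic inequality, so those terms would have to be absorbed into the constant, inflating it beyond the stated $\zeta$. Second, your planned induction for $g_\chi$ is the right idea in spirit, but the paper simply imports $g_\chi$ from Childs and Wiebe as the coefficient $1$-norm amplification of a single $S_{2\chi}$ block, and that is the only place it is needed once you adopt the remainder-versus-remainder split. If you repair (I) by comparing $\rest_{2\chi R}(U)$ and $\rest_{2\chi R}(\widetilde{M}_{2\chi,R})$ directly rather than routing through the idealized product, the rest of your bookkeeping — expanding by multilinearity and recombining the coefficient weights into $\zeta^{(\mathrm{m})}$, respectively $\zeta^{(\mathrm{cf})}$ — matches the paper's argument.
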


Consequently, these formulas can be used for randomized sampling via Theorem~\ref{theo:time_evolution}.
A numerical comparison of the error bounds of all presented formulas can be found in Fig.~\ref{fig:compare_bounds}, while an overview of all relevant quantities is provided in Table~\ref{tab:compare_methods}.

While the formula dependent $\zeta^{(\mathrm{m})}_{{\chi,R}}$ and $\zeta^{(\mathrm{cf})}_{{\chi,R}}$ themselves are not too illuminating, they can be upper bounded by the corresponding resolution factors multiplied by a term depending on the magnitudes of the used parameters $\bb$. However, these upper bounds are not tight which is why their exact versions are used in any explicit calculations. Nevertheless, they showcase that a bad choice of $\bb$ can lead to significant penalties in the performance of the resulting multi-product formula, further stressing the need for optimizing with respect to these parameters.

The matching and closed-form multi-product formulas both rely on products of $\mathcal{L}_{2\chi, R}(\boldsymbol{\nu}^{(r)}, \boldsymbol{b}^{(r)}, t)$. For Algorithm ~\ref{algo:random_sampling}, these products can either be expanded and the resulting operators and coefficients be identified with the sets $\lbrace V_k \rbrace$, $\lbrace p_k \rbrace$, or sets of $(V_\circ^{(1)},V_\bullet^{(1)}), \dots, (V_\circ^{(R)},V_\bullet^{(R)})$ be drawn and applied  as in Fig.~\ref{fig:circuit}. For the matching multi-product formula,  all  $V_\circ^{(r)}$ and  $V_\bullet^{(r)}$ are drawn independently from each other while for the closed-form multi-product formula, the set from which $V_{\circ / \bullet}^{(r+1)}$ is drawn depends on the set from which  $V_{\circ / \bullet}^{(r)}$ has been sampled.
\begin{figure}
    \centering
    \includegraphics{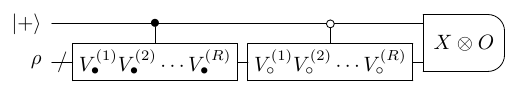}       
    \caption{Quantum circuit for randomized sampling with a multi-product formula $\prod_{r=1}^R\left(\sum_k p^{(r)}_k V^{(r)}_k\right)$. The unitaries $V_{\circ}^{(r)}, V_{\bullet}^{(r)}$ are drawn independently at random from $\lbrace V^{(r)}_k \rbrace_k$  according to the probabilities $\lbrace p^{(r)}_k \rbrace_k$ for all $r=1 \dots R$, such that the circuit samples from $\EE(\langle O \rangle)$. Again, it is in general sufficient to multiply the measurement outcomes instead of performing a joint measurement.}
    \label{fig:circuit}
\end{figure}

Furthermore, since the above framework randomizes only over product formulas, we are not necessarily bound to Trotter-Suzuki formulas and one could also think of a doubly-stochastic version in which the short-term evolutions comprising the product formulas are sampled randomly as well. In that sense, it is  allowed to construct the multi-product formulas with building blocks of 
\begin{align}
    \widehat{S}_{2}(t)= \frac{1}{2}\left( S_1^{\phantom{\dagger}}(t) + S_1^{\dagger}(-t) \right) \, ,
\end{align}
rather than \eqref{eq:trotter_suzuki_0}. This is not possible for Childs and Wiebe's multi-product formulas, which require symmetric building blocks and are thus limited to Trotter-Suzuki formulas.

\section{Proofs}
\label{sec:proofs}
In this section, we provide proofs of the statements presented above. In Section~\ref{subsec:rando_sampl}, we are proving our statements regarding the errors and uncertainty of the randomized sampling procedure. We then turn our attention to the matching and closed-form multi-product formula, providing the intuition behind their construction and verifying their error scaling in Section~\ref{subsec:mpf_1}. We finally analyze the upper bounds for their error (which is the error of the average time evolution they describe) in Section~\ref{subsec:errors}.
\subsection{Randomized sampling}
\label{subsec:rando_sampl}

Let us start by following Algorithm~\ref{algo:random_sampling} step by step. The controlled and anti-controlled applications of $V_\circ$ and $V_\bullet$ are defined as
\begin{align}
    \mathrm{\overline{C}}V_\circ &\coloneqq \ket{0}\!\!\bra{0}\otimes V_\circ +\ket{1}\!\!\bra{1}\otimes \id ,\\
    \mathrm{C}V_\bullet&\coloneqq\ket{0}\!\!\bra{0}\otimes\id +\ket{1}\!\!\bra{1}\otimes V_\bullet \, ,
\end{align}
and the initial state $\rho$ will be transformed to
\begin{align}
     \widetilde{\rho} &= \mathrm{C}V_\bullet\mathrm{\overline{C}}V_\circ \, \left( \left| +\right\rangle\!\!\left\langle + \right|\otimes \rho\right) \,(\mathrm{\overline{C}}V_\circ)^\dagger(\mathrm{C}V_\bullet)^\dagger \\
     &= \frac{1}{2}\left(\ket{0}\!\!\bra{0}\otimes V^{\phantom{\dagger}}_\circ\rho V_\circ^\dagger+\ket{0}\!\!\bra{1}\otimes V^{\phantom{\dagger}}_\circ\rho V_\bullet^\dagger+\ket{1}\!\!\bra{0}\otimes V^{\phantom{\dagger}}_\bullet\rho V_\circ^\dagger+\ket{1}\!\!\bra{1}\otimes V^{\phantom{\dagger}}_\bullet\rho V_\bullet^\dagger\right)
\end{align}
after the third step of the protocol. The expectation value for the succeeding measurement of $X\otimes \id$ is then given by
\begin{equation}
    \label{eq:quantum_average}
    \trace{X\otimes O \widetilde{\rho}} = \frac{1}{2}\trace{O(V^{\phantom{\dagger}}_\circ\rho V_\bullet^\dagger +V^{\phantom{\dagger}}_\bullet\rho V_\circ^\dagger)}.
\end{equation}
Here, we note that is in general not necessary to perform a joint measurement but sufficient to multiply the measurement results. This might be beneficial for cases in which the measurement of $O$ is only available in a black-box fashion.
We now use that $V_\circ$ and $V_\bullet$ are sampled independently from the same ensemble
\begin{equation}
V_\circ, V_\bullet\stackrel{i.i.d.}{\sim} \mV=\set{(V_k,p_k)}_{k=1}^M
\end{equation}
such that
\begin{equation}
    \label{eq:classical_average}
    \EE_\mV[V_\circ]=\EE_\mV[V_\bullet]=\sum_{k=1}^Mp_kV_k= V \, .
\end{equation}
Then, we can combine the quantum average form Eq.~\eqref{eq:quantum_average} with the classical average from Eq.~\eqref{eq:classical_average} to conclude that the expectation value of the random single-shot measurement outcome $o$ is given by
\begin{align}
    \EE[o]&=\EE_\mV\left[\trace{X\otimes O \widetilde{\rho}}\right]
    = \trace{O V\rho V^\dagger},
\end{align}
which proves Corollary~\ref{coro:sampling_expectation}.

For real-world applications, we will never achieve a perfect expectation value. It is therefore vital to inspect the single-shot behavior of Algorithm~\ref{algo:random_sampling} and give an estimate for the number of iterations required to achieve the desired precision.
Since the possible outcomes of measuring $X$ on the auxiliary qubit are $\pm1$ while the absolute of the outcomes of the $O$ measurement are bounded by $\norm{O}$, each single-shot measurement outcome $o_j\in[-\norm{O},\norm{O}]$
Consequently, by applying Hoeffding's inequality, we can prove the validity of Theorem~\ref{theo:randomized_implementation_unitaries}.

\begin{proof}[Proof of Theorem~\ref{theo:randomized_implementation_unitaries}]
Since the single-shot measurement outcomes $\lbrace o_j \rbrace$ obtained from $N$ iterations of Algorithm~\ref{algo:random_sampling} are individually  bounded by the interval $[-\norm{O},\norm{O}]$, Hoeffding's inequality states
\begin{equation}
\mathrm{Prob}\left(\Abs{\frac{1}{N}\sum_{j=1}^N o_j -  \trace{O V\rho V^\dagger}}\geq \varepsilon\right)\leq 2\exp{\left(-\frac{N\varepsilon^2}{2\norm{O}^2}\right)}.
\end{equation}
For a fixed accuracy $\varepsilon\in(0,1)$ and confidence $\delta\in(0,1)$, it is therefore sufficient to perform
\begin{equation}
    N\geq \frac{2\lVert O\rVert^2 \ln(2/\delta)}{\epsilon^2}
 \end{equation}
repetitions of Algorithm~\ref{algo:random_sampling} to achieve the desired accuracy and confidence, concluding the proof.
\end{proof}
Given recent developments, for example in Ref.~\cite{huangPredictingManyProperties2020a}, one might wonder whether
substantial improvements are possible by using a more refined estimator, most notably median-of-means.
Unfortunately, this is not very realistic in most scenarios. For observables that obey $O^2 = \mathbb{I}$, such as local and global Pauli observables, the variance of a single-shot outcome $o_j \in \left[-1,1\right]$ becomes $\mathrm{Var} \left[o_j \right] = 1 - \mathbb{E} \left[ o_j \right] = \mathcal{O}(1)$. In this case, the variance is of the same order as the magnitude and it is impossible to (asymptotically) improve over Hoeffding's inequality (asymptotic normality) \cite{lecam1960asymptotic}. However, median-of-means could still be used in this setting to take advantage of additional information about the variance, in cases where it is available.
Now, we additionally assume that $V$ times the resolution $\Xi$ approximates the time evolution operator $U$, i.e.,
\begin{equation}
    \label{eq:U_and_V_close}
    \norm{\Xi V-U} \leq \varepsilon \, .
\end{equation}
Furthermore, we will use the following result.

\begin{lemma}[Closeness of expectation values]
\label{lem:operator_diamond}
Let $U$ be unitary and $V$ and $\Xi$ as given above such that Eq.~\eqref{eq:U_and_V_close} holds. Furthermore, fix a state $\rho$ and observable $O$. Then,
\begin{equation}
    \Abs{\trace{OU\rho U^\dagger}-\Xi^2\trace{O V\rho V^\dagger}}\leq3\varepsilon\norm{O} \, .
\end{equation}
\end{lemma}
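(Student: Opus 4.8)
The plan is to bound the difference by inserting the approximation $U \approx \Xi V$ and using a telescoping/triangle-inequality argument. Write $A \coloneqq U$ and $B \coloneqq \Xi V$, with $\norm{A - B} \leq \hat\varepsilon$ and $\norm{A} = 1$, hence $\norm{B} \leq 1 + \hat\varepsilon$. The target quantity is $\abs{\trace{O A \rho A^\dagger} - \trace{O B \rho B^\dagger}}$, since $\Xi^2 \trace{OV\rho V^\dagger} = \trace{O(\Xi V)\rho(\Xi V)^\dagger} = \trace{OB\rho B^\dagger}$. First I would write the difference of the two ``sandwiched'' expressions as a telescoping sum,
\begin{equation}
    A\rho A^\dagger - B\rho B^\dagger = (A-B)\rho A^\dagger + B\rho(A-B)^\dagger,
\end{equation}
so that
\begin{equation}
    \trace{O A\rho A^\dagger} - \trace{O B\rho B^\dagger} = \trace{O(A-B)\rho A^\dagger} + \trace{O B\rho (A-B)^\dagger}.
\end{equation}

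Next I would bound each term in absolute value using $\abs{\trace{XY}} \leq \norm{X}_\infty \norm{Y}_1$ together with submultiplicativity and the fact that $\rho$ is a density operator ($\norm{\rho}_1 = 1$). The first term is at most $\norm{O} \cdot \norm{A-B} \cdot \norm{A} \cdot \norm{\rho}_1 \leq \hat\varepsilon$, using $\norm{O} \leq 1$ and $\norm{A} = 1$; the second term is at most $\norm{O} \cdot \norm{B} \cdot \norm{\rho}_1 \cdot \norm{A-B} \leq (1+\hat\varepsilon)\hat\varepsilon$. Summing gives the bound $(2 + \hat\varepsilon)\hat\varepsilon$. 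Since $\hat\varepsilon \leq 1$ in the relevant regime (indeed Eq.~\eqref{eq:approx_time_evolution} uses $\hat\varepsilon = \varepsilon/3$ with $\varepsilon \in (0,1)$), we have $(2+\hat\varepsilon)\hat\varepsilon \leq 3\hat\varepsilon$, which is exactly the claim.

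The only mild subtlety — and the one place to be careful — is the bookkeeping of which operator norm (operator/spectral versus trace norm) goes where in the Hölder-type inequality $\abs{\trace{XY}} \leq \norm{X}_\infty\norm{Y}_1$, and making sure the factor $\norm{B}\leq 1+\hat\varepsilon$ is handled correctly rather than being replaced by $\norm{B}\leq 1$. One could alternatively route everything through the cyclicity of the trace to always put $\rho$ in the trace-norm slot; either way the argument is short. A secondary point is to note explicitly that the estimate is only needed (and only invoked, via Theorem~\ref{theo:time_evolution}) when $\hat\varepsilon\leq 1$, so that absorbing $(2+\hat\varepsilon)\hat\varepsilon$ into $3\hat\varepsilon$ is legitimate; if one wanted a bound valid for all $\hat\varepsilon$ one would instead state $(2+\hat\varepsilon)\hat\varepsilon$. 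I expect no real obstacle here — this is a standard perturbation estimate for expectation values under a near-unitary channel, and the main work is choosing a clean presentation of the two-line telescoping identity.
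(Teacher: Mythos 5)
Your proof is correct and follows essentially the same route as the paper: the paper writes $\Xi V = U + \widetilde{U}$ with $\norm{\widetilde{U}}\leq\hat\varepsilon$ and expands the difference into the three cross terms $\widetilde{U}\rho U^\dagger + U\rho\widetilde{U}^\dagger + \widetilde{U}\rho\widetilde{U}^\dagger$, obtaining the same bound $2\hat\varepsilon+\hat\varepsilon^2\leq 3\hat\varepsilon$ that your two-term telescoping identity gives. Your explicit remark that the final absorption requires $\hat\varepsilon\leq 1$ is a point the paper leaves implicit, but it is satisfied in the regime where the lemma is invoked.
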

\begin{proof}
We begin by defining $\widetilde{U}$ as the difference between the exact and approximated time evolution
\begin{equation}
    \Xi V = \Xi \sum_{k=1}^Mp_kV_k = U+\widetilde{U}.
\end{equation}
According to Eq.~\eqref{eq:U_and_V_close} we find that $\norm{\widetilde{U}}\leq\varepsilon$. Consequently,
\begin{align}
\Abs{\trace{OU\rho U^\dagger} - \Xi^2\trace{OV\rho V^\dagger}}\notag   &= \Abs{\trace{\widetilde{U}^\dagger O U\rho} + \trace{U^\dagger O \widetilde{U}\rho}+\trace{\widetilde{U}^\dagger O\widetilde{U}\rho}}\\
&\leq 2\norm{\widetilde{U}}\norm{U}\norm{O}+\norm{\widetilde{U}}^2\norm{O} \\ & \leq \left(2\varepsilon+\varepsilon^2\right)\norm{O}\\
&\leq 3\varepsilon\norm{O}.
\end{align}
\end{proof}

By combining the insights from the previous discussion, we can now tackle Theorem~\ref{theo:time_evolution}:
\begin{proof}[Proof of Theorem~\ref{theo:time_evolution}]
    Add and subtract $\Xi^2\trace{O V\rho V^\dagger}$ to find
    \begin{equation}
        \Abs{\frac{\Xi^2}{N}\sum_{j=1}^No_j-\trace{OU\rho U^\dagger}}\leq \Xi^2 \Abs{\frac{1}{N}\sum_{j=1}^No_j - \trace{O V\rho V^\dagger}}+ \Abs{\vphantom{\sum_{j=1}^N}\Xi^2\trace{O V\rho V^\dagger}-\trace{OU\rho U^\dagger}}.
    \end{equation}
    By applying Theorem~\ref{theo:randomized_implementation_unitaries} with accuracy $\varepsilon/\Xi^2$ to the first term and using Lemma~\ref{lem:operator_diamond}, we find that this upper bound is given by $(3\norm{O}+1) \varepsilon$.
\end{proof}

\subsection{Construction of the new multi-product formulas}
\label{subsec:mpf_1}
In Definitions~\ref{def:our_lcu_v1} and \ref{def:our_lcu_v2}, we propose two alternatives to Childs and Wiebe's multi-product formulas, which reduce the number of circuit evaluations required for a randomized implementation and have improved error scaling. In the following, we will motivate their construction and stress the advantages they provide.
First of all, note that every approximation of the exact time evolution $\widetilde{U}(t) \approx \ee^{-\ii tH}$  can be written as
a sum of operators $\hat{A}_{k}$ such that
\begin{align}
    \widetilde{U}(t) = \sum_{k=0}^{\infty} \hat{A}_{k} \, t^k \, ,
\end{align}
with $\hat{A}_{0} = \mathbb{I}$.
For any approximated time evolution with an error of $\mathcal{O}(t^{m+1})$, we find that $\hat{A}_k = (iH)^k/{k!}$ for all $k\leq m$. In other words, the Taylor expansion of the exact time evolution and its approximation has the same Taylor expansion for the first $m$ non-trivial terms.

For the standard Trotter-Suzuki formula with $\widetilde{U}(t) = S_{2\chi}(t)$, we have 
\begin{equation}
\hat{A}_k = (iH)^k/{k!} 
\end{equation}
for all $k\leq 2\chi$. For $k>2\chi$, these operators $\hat{A}_k$ resemble uncontrolled, erroneous operators.
In Definition~\ref{def:new_multi_prod_base}, we propose a superposition of approximations with differently scaled times $S_{2\chi}(b_nt)$, introducing controllable modulation parameters $\nu_k$ up to a Taylor order of $2\chi R$. Its Taylor expansion is given by
\begin{align}
    \label{eq:decomp0}
    \mL_{2\chi, R}(\nn,\bb, t) &= \sum_{q=0}^{2\chi R}C_q (\nn,\bb)\,  S_{2\chi}(b_q t) \\ &= \sum_{k=0}^{\infty} \left( \sum_{q=0}^{2\chi R} C^{\phantom{k}}_q\!(\nn,\bb) \, b_q^k\right) \hat{A}_k t^k \\ &= \sum_{k=0}^{2\chi R} \nu_k\,  \hat{A}_k t^k \; + \; \mathcal{O}\left(t^{2\chi R + 1}\right), \label{eq:decomp2}
\end{align}
where $C_q, b_q \in \mathbb{R}$ and $\nu_k = ( \sum_q C^{\phantom{k}}_q  b_q^k)$ is fixed via the linear transformation of the coefficients $$\boldsymbol{C} = (C_0, C_1,\dots , C_{2 \chi R} )^\top$$ with the Vandermonde matrix  $B_{j,k} = b_k^{j-1}$, $B\boldsymbol{C} =\nn$ as 
in Eq.~\eqref{eq:vandermonde}. The condition number of the Vandermonde matrix
as the product of its Hilbert-Schmidt norm and the norm of the pseudo-inverse can be
bounded from above and below by explicit expressions 
involving the vector defining the Vandermonde matrix \cite{Bazan}.
Choosing the vector $\boldsymbol{b}$, we can  calculate the coefficients $\boldsymbol{C}$ for a fixed solution vector $\nn$ using the inverse  Vandermonde matrix $B^{-1}$, which is found exactly to be \cite{el-mikkawyExplicitInverseGeneralized2003,Knuth}
\begin{align} 
     \left(B^{-1}\right)_{j,k} = \frac{(-1)^{k-1}}{\prod\limits_{m \, \in \,  \boldsymbol{\mu}(j)} \left(b_m - b_j \right)} \sum_{\substack{\boldsymbol{a} \in \mathbb{F}_2^{^{2\chi R}} \\ |\boldsymbol{a}| = 2\chi R-k}}   \prod_{i=0}^{2\chi R-1} \left(b_{\mu_i(j)}\right)^{a_i} \\
     \notag \text{with} \quad \boldsymbol{\mu}(j) = (0,1, \ldots, j-1, j+1,  \ldots  , 2\chi R ) \, ,
\end{align}
where the sum runs over all binary strings $\boldsymbol{a}=(a_0 \,a_1 \, \dots \, a_{2\chi R -1})$ of length $2\chi R$ and Hamming weight $2\chi R - k$. However, we have found that for numerical purposes, a matrix inversion of $B$ clearly outmatches the analytical computation of $B^{-1}$ in terms of runtime.

Using Eq.~\eqref{eq:decomp2}, we can now manipulate the Taylor expansion of a Trotter-Suzuki block $S_{2\chi}(t)$ by replacing it with some $\mathcal{L}_{2\chi, R}(\boldsymbol{\nu}, \boldsymbol{b}, t)$. The key insight here is that while the operators $\hat{A}_k$ are only correct for Taylor orders $k\leq 2\chi$, we gain control of the prefactors up to order $2\chi R$. Consequently, we can eliminate all erroneous $\hat{A}_k$ for $2\chi<k\leq 2\chi R$ by setting the corresponding $\nu_k$ to zero.
This allows us to construct the matching and closed-form multi-product formulas using blocks of $\mathcal{L}_{2\chi, R}(\boldsymbol{\nu}, \boldsymbol{b}, t)$ with different $\boldsymbol{\nu}$ (and possibly $\boldsymbol{b}$).

\subsubsection{Matching multi-product formula}
The first version of our proposed multi-product formula builds upon the multiplication of $R$  multi-product building blocks $\mL_{2\chi,R}(\nn^{(r)} ,\bb^{(r)}, t)$ that according to Eq.~\eqref{eq:decomp2} can be written as
\begin{equation}
    \mL_{2\chi,R}\left(\nn^{(r)},\bb^{(r)},t\right)=\sum_{k=0}^{2\chi}\frac{\nu^{(r)}_k}{k!} \left(-\ii Ht\right)^k + \sum_{k=2\chi+1}^{2\chi R} \nu_k^{(r)} \hat{A}_k t^k+
    \mathcal{O}\left(t^{2\chi R+1}\right) \, , \label{eq:scaling1}
\end{equation}
where we can eliminate the second sum by setting $\nu_k=0$ for $2\chi<k\leq2\chi R$.
Their product now yields
\begin{equation}
    \prod_{r=1}^R\mL_{2\chi,R}\left(\nn^{(r)},\bb^{(r)},t\right) = \sum_{k=0}^{2\chi R}\mu_k \left(-\ii Ht\right)^k + \mathcal{O}(t^{2\chi R+1})
\end{equation}
with
\begin{equation}
\mu_k \,= \sum_{i_1+i_2+\ldots+i_R =k} \frac{\nu_{i_1}^{(1)}\nu_{i_2}^{(2)}\cdots \nu_{i_R}^{(R)} }{i_1!i_2!\cdots i_R!} \, .
\end{equation}
To mimic the exact time evolution up to $\mathcal{O}(t^{2\chi R + 1})$, we require 
$\mu_k=1/k!$, leading to Definition~\ref{def:our_lcu_v1}.

\subsubsection{Closed-form multi-product formula}
The closed-form version of the presented multi-product formula  sums products of building blocks $\mL$.  To motivate the specific construction, it is useful to take a look at the specific case of $2\chi=4$ and $R=3$ for some choice of  $\bb$ and $t$ using the shorthand 
\begin{equation}
\mL\left( \begin{smallmatrix} \nu_0 \\ : \\ \nu_{2\chi R}  \end{smallmatrix}\right) := \mL(\nn, \bb, t),
\end{equation}
to get
\begin{equation}
\label{eq:LCUv2_idea}
    \widetilde{M}^{(\mathrm{cf})}_{4,3} = 
    \mL\left( \begin{matrix}
    1\\
    \boldsymbol{1}\\
    \boldsymbol{1}\\
    \boldsymbol{1}\\
    \boldsymbol{1}\\
    0\\
    0\\
    \vdots
    \end{matrix} \right) +  4!\, \mL\left( \begin{matrix}
    0\\
    \boldsymbol{0}\\
    \boldsymbol{0}\\
   \boldsymbol{0}\\
    \boldsymbol{1}\\
    0\\
    0\\
    \vdots
    \end{matrix} \right) \mL\left( \begin{matrix}
    0\\
    \boldsymbol{1!/5!}\\
    \boldsymbol{2!/6!}\\
   \boldsymbol{3!/7!}\\
    \boldsymbol{4!/8!}\\
    0\\
    0\\
    \vdots
    \end{matrix} \right) +  (4!)^2   \mL\left( \begin{matrix}
    0\\
    \boldsymbol{0}\\
    \boldsymbol{0}\\
   \boldsymbol{0}\\
    \boldsymbol{1}\\
    0\\
    0\\
    \vdots
    \end{matrix} \right) \mL\left( \begin{matrix}
    0\\
    \boldsymbol{0}\\
    \boldsymbol{0}\\
    \boldsymbol{0}\\
    \boldsymbol{1}\\
    0\\
    0\\
    \vdots
    \end{matrix} \right) \mL\left( \begin{matrix}
    0\\
   \boldsymbol{1!/9!}\\
   \boldsymbol{2!/10!}\\
    \boldsymbol{3!/11!}\\
    \boldsymbol{4!/12!}\\
    0\\
    0\\
    \vdots
    \end{matrix} \right) \,  ,
\end{equation}
where the orders $1,\dots,2\chi$ have been visually highlighted for clarity. 
 In this example, the eleventh order in $t$ is revealed by multiplying the corresponding terms from the Taylor expansion of $\mL$ in Eq.~\eqref{eq:decomp0}, with the corresponding coefficients in \eqref{eq:LCUv2_idea}
\begin{equation}
    (4!)^2 \times \frac{(-\ii Ht)^4}{4!} \times \frac{(-\ii Ht)^4}{4!}  \times   3!/11! \times \frac{(-\ii Ht)^3}{3!} = \frac{(-\ii H t)^{11}}{11!} \, .
\end{equation}
The first term of Eq.~\eqref{eq:LCUv2_idea}  takes care of the zeroth and the first $2\chi$ order of $U$. The second term multiplies all terms of orders $t^1 \dots t^{2\chi}$ with $t^{2\chi}$ and so takes care of the next $2\chi$ terms of the expansion. The third term multiplies with a $\mathcal{O}(t^{2\chi})$ term twice, taking care of the subsequent $2 \chi$ terms -- a pattern emerges. 
In a general setting  (with arbitrary $2 \chi$ and $R$) we can write this sum as
\begin{align}
\notag
    \widetilde{M}_{2\chi,R}^{(\mathrm{cf})}(t)&\coloneqq  \sum_{r=1}^R\left(\mL_{2\chi,R}\left(\nn^{(0)},\bb^{(0)}, t\right)\right)^{r-1}\mL_{2\chi,R}\left(\nn^{(r)},\bb^{(r)}, t\right) \\
    &= \sum_{r=1}^R\left(\sum_{j=0}^{2\chi R} \nu_{j}^{(0)} \frac{\left(-\ii  H t\right)^j}{j!}+ \mathcal{O}\left( t^{2\chi R + 1} \right)\right)^{r-1}\left( \sum_{k=0}^{2\chi R} \nu_{k}^{(r)} \frac{\left(-\ii  Ht \right)^k}{k!}  + \mathcal{O}\left( t^{2\chi R + 1} \right)\right)   \, 
    \label{eq:LCU_2_again} \\
    &= \sum_{r=1}^R \left(\frac{(-\ii H t)^{2\chi}}{(2\chi)!}\right)^{r-1}  \sum_{k=0}^{2\chi R}  \nu_{k}^{(r)} \frac{\left(-\ii  Ht \right)^k}{k!} \; + \; \mathcal{O}\left( t^{2\chi R + 1} \right)  \label{eq:LCU_2_again_and_again}
\end{align}
where we have used Eq.~\eqref{eq:decomp2} in Eq.~\eqref{eq:LCU_2_again} and $\nu_{j}^{(0)} = \delta_{2\chi, j}$ from Definition~\ref{def:our_lcu_v2} to collapse the first factor into $(-\ii H t)^{2\chi(r-1)}/(2\chi)!$ in Eq.~\eqref{eq:LCU_2_again_and_again}.  Considering also that $\nu_{j}^{(0)} = 1$ for $0\leq j \leq 2\chi$, the $r=1$ term (that we recognize are the first $2 \chi$ orders of the time evolution) can be separated from the sum. Discarding all terms that vanish due to $\nu_{k}^{(r)} = 0$  we rewrite Eq.~\eqref{eq:LCU_2_again_and_again} to
\begin{align}
    \widetilde{M}_{2\chi,R}^{(\mathrm{cf})}(t)\; &= \;  \sum_{j=0}^{2\chi} \frac{(-\ii H t)^{j}}{j!} \; + \; \sum_{r=2}^{R} \sum_{k=1}^{2\chi}  \nu_{k}^{(r)} \frac{\left( -\ii Ht \right)^{2\chi(r-1) + k}}{\left((2\chi)!\right)^{r-1}k! } \; + \; \mathcal{O}\left( t^{2\chi R + 1} \right) \,  , \label{eq:scaling2}
\end{align}
for which we consult Definition~\ref{def:our_lcu_v2}, a last time resolving the remaining $\nu_{k}^{(r)}$. This leaves us with the correct time evolution up to order $2\chi R + 1$, thus proving the definition.

\subsection{Error of the averaged operators}
\label{subsec:errors}
 Following upon the insights from Section~\ref{subsec:mpf_1} we find   Corollary~\ref{cor:error_scaling} already proven by Eq.~\eqref{eq:scaling1} and Eq.~\eqref{eq:scaling2} as long as $\widetilde{M}_{2\chi,R}$ is constructed according to Definition~\ref{def:our_lcu_v1} or \ref{def:our_lcu_v2}.
The proof of Theorem~\ref{theo:error_bound} requires a more involved error analysis.
\begin{proof}[Proof of Theorem~\ref{theo:error_bound}]
We first need to bound the remainder terms of the Taylor series expansions of $U(t)$ and $\widetilde{M}^{(\mathrm{m})}$. 
In the following, $\rest_\ell(f)$ denotes the remainder term of the Taylor series of an operator-valued function $f$ truncated at $\ell$-th order in $t$.
We thus find that
\begin{align}
    \nonumber \Norm{\vphantom{\prod_r^R}U(t) - \widetilde{M}^{(\mathrm{m})}_{2\chi,R}} &= 
    \Norm{\ee^{-\ii t H} - \prod_{r=1}^R\left[\sum_{q=0}^{2\chi R} C_q\!\left(\nn^{(r)}, \bb^{(r)}\!\right)\, S_{2\chi}\left( b^{(r)}_qt\right)\right]}\\
    &\leq \Norm{ \vphantom{\rest_{2\chi R}\left(\prod_{r=1}^R\left[\sum_{q=0}^{2\chi R} C_q\!\left(\nn^{(r)}, \bb^{(r)}\!\right)\, S_{2\chi}\left( b^{(r)}_qt\right)\right]\right)}\rest_{2\chi R}\left( \ee^{-\ii t H}\right)} + \Norm{\rest_{2\chi R}\left(\prod_{r=1}^R\left[\sum_{q=0}^{2\chi R} C_q\!\left(\nn^{(r)}, \bb^{(r)}\!\right)\, S_{2\chi}\left( b^{(r)}_qt\right)\right]\right)}.
    \label{eq:bound1}
\end{align}
Moving forward, we employ some recently established results on the theory of Trotter errors.
Specifically, we make use of the `Trotter error with 1-norm scaling' lemma of
Ref.~\cite{childsTheoryTrotterError2020}. Building upon these insights in this fresh context, we find that the exponential remainder of a product formula such as in Eq.~\eqref{eq:proform} can be bounded by
\begin{align}
    \Norm{\rest_\ell\left( \prod_{j=1}^N \ee^{-\ii \alpha_j h_{k_j} t}\right)} &= \frac{t^{\ell+1}}{(\ell+1)!}\Norm{ \left(\frac{\partial}{\partial t}\right)^{\ell+1} \prod_{j=1}^{N} \ee^{-\ii \alpha_j h_{k_j} t}} \\
    &=\frac{t^{\ell+1}}{(\ell+1)!} \Norm{\sum_{x_1 + \dots  + x_N = \ell + 1} \frac{(\ell + 1)!}{x_1! \cdots x_N!} \prod_{j=1}^{N} \left(\frac{\partial}{\partial t} \right)^{x_j} \ee^{-\ii \alpha_j h_{k_j}t}}\nonumber \\
     &=\frac{t^{\ell+1}}{(\ell+1)!} \Norm{\sum_{x_1 + \dots  + x_N = \ell + 1} \frac{(\ell + 1)!}{x_1! \cdots x_N!} \prod_{j=1}^{N} \left(-\ii \alpha_j\,  h_{k_j} \right)^{x_j} \ee^{-\ii \alpha_j h_{k_j} t}}\nonumber\\
      &\leq\frac{t^{\ell+1}}{(\ell+1)!} \sum_{x_1 + \dots  + x_N = \ell + 1} \frac{(\ell + 1)!}{x_1! \cdots x_N!} \prod_{j=1}^{N} \Norm{\alpha_j\,  h_{k_j}}^{x_j} \cdot \underbrace{\Norm{\ee^{-\ii \alpha_j h_{k_j} t}}}_{=1}\nonumber\\
      &= \frac{\left(\sum_{j=1}^N \Norm{\alpha_j \, h_{k_j}} t \right)^{\ell + 1 }}{(\ell+1)!} \, .\nonumber
\end{align}
This bound can now be applied to \eqref{eq:bound1}. For the first term, we obtain
\begin{equation}
    \Norm{\rest_{2\chi R}\left( \ee^{-\ii t H}\right)}\leq \frac{\left(\Lambda t\right)^{2\chi R +1}}{(2\chi R+1)!},
\end{equation}
while the second term can be bounded by

\begin{align}
    \nonumber
    \Norm{\rest_{2\chi R}\left(\prod_{r=1}^R\left[\sum_{q=0}^{2\chi R} C_q^{(r)} S_{2\chi}\left( b^{(r)}_qt\right)\right]\right)}
    &\leq \sum_{q_1,\ldots,q_R=0}^{2\chi R} \Abs{C_{q_1}^{(1)}\cdots  C_{q_{R}}^{(R)}}\Norm{\rest_{2\chi R}\left(S_{2\chi}\left( b^{(1)}_{q_1}t\right)\cdots S_{2\chi}\left( b^{(R)}_{q_{R}}t\right)\right)}\\
    &\leq \frac{\left(g_\chi\Lambda t\right)^{2\chi R+1}}{(2\chi R+1)!}\sum_{q_1,\ldots,q_R=0}^{2\chi R} \Abs{C_{q_1}^{(1)} \cdots  C_{q_{R}}^{(R)}}
    \left(\sum_{i=1}^R\abs{b_{q_i}^{(i)}}\right)^{2\chi R+1}
\end{align}

where we have introduced the factor
\begin{equation}
    g_\chi \coloneqq\frac{4\chi}{3}\left(\frac{5}{3}\right)^{\chi-1}
\end{equation}
to group all terms relating to $\chi$ from the Trotter-Suzuki decomposition appearing in Eq. (53) of Ref.~\cite{childsHamiltonianSimulationUsing}.

At the same time, we have defined $\Lambda\coloneqq \sum_k\norm{h_k}$.
Consequently, we can bound the error of the matching multi-product formula via
\begin{equation}
    \Norm{U(t)-\widetilde{M}^{(\mathrm{m})}_{2\chi,R}(t)} \leq \left(1 + \zeta^{(\mathrm{m})}_{{\chi,R}} g_{\chi}^{2\chi R+1}\right) \frac{\left( \Lambda t \right)^{2\chi R+1}}{\left(2\chi R+1\right)!} \,, 
\end{equation}
with
\begin{equation}
\label{eq:zeta_m}
    \zeta^{(\mathrm{m})}_{{\chi,R}} \coloneqq \sum_{q_1=0}^{2\chi R}\sum_{q_2=0}^{2\chi R}\cdots \sum_{q_R=0}^{2\chi R} \Abs{C_{q_1}^{(1)}C_{q_2}^{(2)} \cdots  C_{q_R}^{(R)}}\left( \abs{b_{q_1}^{(1)}}+\ldots +\abs{b_{q_R}^{(R)}}\right)^{2\chi R+1}
\end{equation}
The error analysis of the closed-form multi-product formula follows along similar lines: Again, we bound the remainder terms of the Taylor series expansions of $U(t)$ and $\widetilde{M}^{(\mathrm{cf})}$, finding
\begin{align}
    \nonumber &\Norm{U(t) - \widetilde{M}^{(\mathrm{cf})}_{2\chi,R}} = 
    \Norm{\ee^{-\ii t H} - \sum_{r=1}^R\left(\mL_{2\chi,R}\left(\nn^{(0)},\bb^{(0)}, t\right)\right)^{r-1}\mL_{2\chi,R}\left(\nn^{(r)},\bb^{(r)}, t\right)}
    \\
    &\leq \Norm{ \vphantom{\rest_{2\chi R}\left(\sum_{r=1}^R\left[\sum_{q=0}^{2\chi R} C_q\!\left(\nn^{(r)}, \bb^{(r)}\!\right)\,   S_{2\chi}\left( b^{(r)}_qt\right)\right]^{r-1}\left[\sum_{q=0}^{2\chi R} C_q^{(r)} \, S_{2\chi}\left( b_qt\right)\right]\right)}    \rest_{2\chi R}\left( \ee^{-\ii t H}\right)}  \notag  \\ & \quad +  \; \Norm{\rest_{2\chi R}\left(\sum_{r=1}^R\left[\sum_{q=0}^{2\chi R} C_q\!\left(\nn^{(0)}, \bb^{(0)}\!\right)\, S_{2\chi}\!\left( b_q^{(0)}t\right)\right]^{r-1}\left[\sum_{q=0}^{2\chi R} C_q\!\left(\nn^{(r)}, \bb^{(r)}\!\right)\, \, S_{2\chi}\!\left( b_q^{(r)}t\right)\right]\right)} \, ,
    \label{eq:bound2}
\end{align}
where the second term can be bounded by
\begin{align}
    &\Norm{\rest_{2\chi R}\left(\sum_{r=1}^R\left[\sum_{q=0}^{2\chi R} C_q\!\left(\nn^{(0)}, \bb^{(0)}\!\right)\, S_{2\chi}\!\left( b_q^{(0)}t\right)\right]^{r-1}\left[\sum_{q=0}^{2\chi R} C_q\!\left(\nn^{(r)}, \bb^{(r)}\!\right)\, \, S_{2\chi}\!\left( b_q^{(r)}t\right)\right]\right)} \notag \\
    & \; \leq \; \sum_{r=1}^R\sum_{q_1=0}^{2\chi R}\cdots \sum_{q_r=0}^{2\chi R} \Abs{C_{q_1}(\nn^{(0)}, \bb^{(0)}) \cdots C_{q_{r-1}}(\nn^{(0)}, \bb^{(0)}) \, C_{q_r}(\nn^{(r)}, \bb^{(r)})} \notag\\ & \qquad\qquad\qquad\qquad\qquad \times \Norm{\rest_{2\chi R}\left(S_{2\chi}\left(b_{q_1}^{(0)}\right) \cdots S_{2\chi}\left(b_{q_{r-1}}^{(0)}\right)\, S_{2\chi}\left(b_{q_r}^{(r)}\right)\right)}
\end{align}
Defining
\begin{eqnarray}
\label{eq:zeta_cf}
\zeta^{(\mathrm{cf})}_{{\chi,R}}&\coloneqq& \sum_{r=1}^R\sum_{q_1=0}^{2\chi R}\sum_{q_2=0}^{2\chi R}\cdots \sum_{q_r=0}^{2\chi R} \Abs{C_{q_1}^{(0)}C_{q_2}^{(0)} \cdots  C_{q_{r-1}}^{(0)}C_{q_{r}}^{(r)}}\left( \abs{b_{q_1}^{(0)}}+\ldots+\abs{b_{q_{r-1}}^{(0)}} +\abs{b_{q_r}^{(r)}}\right)^{2\chi R+1},\\
\Lambda&\coloneqq & \sum_k\norm{h_k}\, ,
\end{eqnarray}
we find
\begin{equation}
    \Norm{U(t)-\widetilde{M}^{(\mathrm{cf})}_{2\chi,R}(t)} \leq \left(1 + \zeta^{(\mathrm{cf})}_{{\chi,R}} g_{\chi}^{2\chi R+1}\right) \frac{\left( \Lambda t \right)^{2\chi R+1}}{\left(2\chi R+1\right)!} \,, 
\end{equation}
concluding the proof.
\end{proof}

\section{Comparison and numerical validation}
\label{sec:numerics}

We will now compare the Trotter-Suzuki product formula algorithm and multi-product formulas in the randomized sampling framework.
To make comparisons as fair as possible, we will assume that each algorithm uses at most $R$-fold sequences of $S_{2\chi}(\cdot)$ blocks. Product formulas may use repetitions as in Eq.~\eqref{eq:reps} achieving an error of $\mathcal{O}((t/R)^{2\chi + 1 })$, which is a reliable way to approximate time evolutions for longer times $R \gg \tau > 1$. However, repetitions improve the accuracy of shorter time evolutions only minimally. The situation is different if $R$ is an integer power of five, which allows us to build the next higher Trotter-Suzuki order and approximate the time evolution up to a leading order of $2(\chi + \log_5 R) + 1$ in $t$. This means improving the leading power of $t$ comes at an exponential cost for the circuit depth. The situation can be remedied by sampling from a multi-product formula. Remarkably, the statement for sampling observable eigenvalues with product formulas is very similar to Theorem~\ref{theo:time_evolution}. 

Since the sampling errors of all of these methods are comparable, we disregard them in this comparison and instead only think about asymptotic limits. The number of required circuit evaluations for the same accuracy then follows directly from the corresponding resolution factors.
The main difference between product and multi-product formulas is that Trotter-Suzuki algorithms do not have a resolution factor. This is equivalent to $\Xi=1$ with consequences for sampling complexity and error bounds. This advantage 
is, however, quickly outweighed as Childs and Wiebe's multi-product formula delivers an improved approximation that is exact up to a leading order of $2(\chi + R)-1$ in $t$. Modifying the leading power of $t$ is now possible by adding exponentially fewer terms. Here, we consider only the simplest form of Childs and Wiebe's multi-product formula with $\ell_q=q$ since their improvements, e.g. those proposed in Ref.~\cite{lowWellconditionedMultiproductHamiltonian2019}, focus on increasing the success probability of their deterministic implementation without improving their error scaling and lead to significantly worse resolution factors. 
Furthermore, note that they require the same circuit depth for $R=K+1$ in the randomized sampling framework due to the final term in Eq.~\eqref{eq:multi_product_formula}. With matching and closed-form multi-product formulas, the approximation can be further improved to leading order $2\chi R+1$ in $t$. This means that only one additional  $S_{2\chi}(\cdot)$ block improves the order by $2\chi$ rather than $2$ as for multi-product formulas of the prior art.
While the scaling in $t$ of the novel multi-product formulas is superior to those of Childs and Wiebe, their theoretical error bound is not as tight, leading to a crossover of error bounds at $\sum|h_k| t<1$. While it is true that the error bounds cross over at comparably small errors, it is important to note that the significantly reduced resolution factors allow for multiple repetitions, i.e., $r$ evolutions of shorter times $t/r$, until the same number of circuit evaluations is required as for Childs and Wiebe's formulas.

We have plotted the error bounds of all formulas for fixed circuit depth between all methods for one particular optimization of the multi-product formula parameters in Fig.~\ref{fig:compare_bounds}. Optimizing the set of parameters $\lbrace\boldsymbol{b}^{(r)} \rbrace$, we tend to achieve noticeably lower resolution factors than obtained using Childs and Wiebe's multi-product formula. We generally find the matching multi-product formula to have a smaller resolution factor than the closed-form multi-product formula. While the matching multi-product formula has a better resolution, it requires an additional layer of classical optimization. 

\begin{figure}[t!]
    \centering
    \includegraphics[scale=0.875]{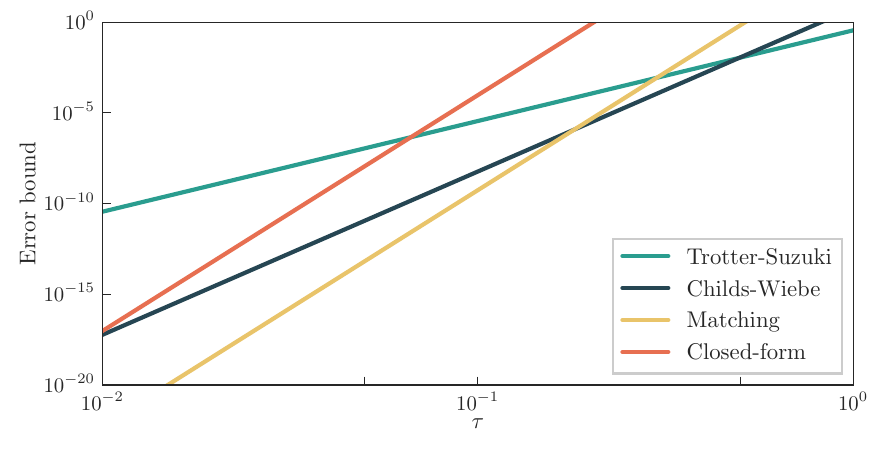}
    \caption{Comparing error bounds for Trotter-Suzuki product formulas as well as several multi-product formulas for different $\tau=\Lambda t$. 
    The exact formulas for these bounds are provided in Equations~\eqref{eq:trotter_bound} and \eqref{eq:childs-wiebe_bound} and Theorem~\ref{theo:error_bound}.
    We have chosen $2\chi = 4$ and $r=R=K+1=3$ to fix the depth of all methods to be 30 oracle calls. The parameters $\lbrace \boldsymbol{b}^{(r)}\rbrace$ have been numerically optimized for the matching and closed-form multi-product formulas with an initial guess of $\boldsymbol{b}=\set{1,-1,2,-2,\ldots,7}$, while we employ $\ell_q=q$ for Childs and Wiebe's formula as the choice with the lowest resolution factor. We find the resolution factors for the multi-product formulas to be $\Xi^{(\mathrm{CW})}\! \approx 3.13$, $\Xi^{(\mathrm{m})} \!\approx 1.22$ and $\Xi^{(\mathrm{cf})} \!\approx 1.36$. 
     Consequently, the matching and closed-form multi-product formulas would allow for four to five repetitions before they require the same amount of circuit evaluations as Childs and Wiebe's formula.
     In the black box optimization, we have used the error bound as the objective function with the modification of using 
    $(\Xi^\mathrm{(m)})^{20}$ and $(\Xi^\mathrm{(cf)})^{10}$, instead of $\zeta_{\chi,R}$ respectively, to ensure reasonable resolution factors.}
    \label{fig:compare_bounds}
\end{figure}

It is important to note that since we did not find useful bounds relating $\boldsymbol{b}$ to any of the resolution factors, further improvements of the bounds in Fig.~\ref{fig:compare_bounds} seem possible. 
In the absence of relations $\Xi(b)$, it is necessary to numerically optimize all $\boldsymbol{b}^{(r)}$ for a chosen loss function, which is not the case for Childs and Wiebe's multi-product formula due to existing, analytical relations.
We found that global optimization using basin hopping combined with Nelder-Mead optimization yields the best results since the optimization landscape exhibits a large number of local minima. 
The optimization is also sensitive to the initial guess for those parameters, with an equal spread of positive and negative integers, i.e., $\boldsymbol{b}_\text{init}=(1,-1,2,-2,3,-3,\ldots,\chi R+1)$, leading to better results. It is also fruitful to vary the loss function used for optimization. 
On the one hand, using the error bound for a fixed $\tau$ yields the lowest error, with the downside that the corresponding resolution factors are too large to be practical. 
On the other hand, solely optimizing for the resolution factor might result in resolution factors arbitrarily close to one with the downside of significantly worse error bounds. 
We, therefore, found the most fruitful loss function to be the error bound with the modification of using the resolution factor $\Xi$ directly instead of $\gamma$ and amplifying its impact on the loss function by using $\Xi^p$ for some power $p$, which allows us to balance both quantities. Nevertheless, the use of different loss functions could be explored further. We have also found that bounding each parameter $b$ by the total maximum leads to more robust results.

To corroborate the functioning of the newly proposed multi-product formulas beyond 
theoretical bounds, we also compare the actual performance with that of the conventional multi-product formula and Trotter-Suzuki. Here, we compare the actual operator distance to the ideal time evolution for the following five physically plausible and meaningful Hamiltonians, with the results shown in Fig.~\ref{fig:compare_numeric}:
\begin{figure}[t!]
    \centering
    \includegraphics[width=.95\linewidth]{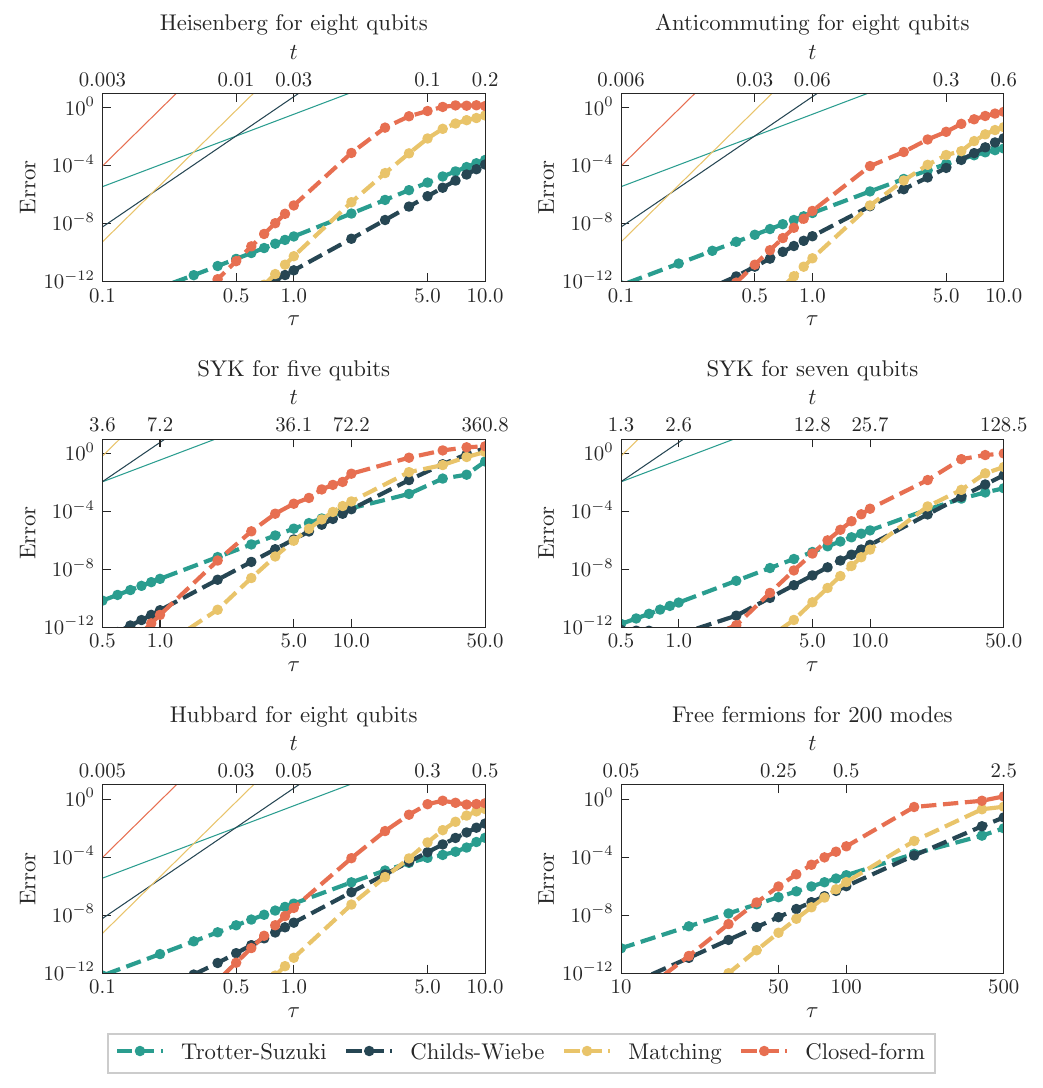}
    \caption{Numerical comparison of Trotter-Suzuki product formulas, Childs and Wiebe's multi-product formula, and the multi-product formulas proposed in this work for different $\tau=\sum_k||h_k|| t$. Here, we approximate the time evolution operator and plot the operator distance between the approximation and the ideal evolution operator for several physically plausible and interesting local, but not necessarily geometrically local, Hamiltonians, defined in equations \eqref{eq:heis}, \eqref{eq:anticom}m \eqref{eq:syk}, \eqref{eq:hubbard} and \eqref{eq:fermions}. 
    The simulated distances (thick, dashed lines) are much smaller than the theoretical bounds (thin, solid lines), but do have remarkably similar features overall.
    }
    \label{fig:compare_numeric}
\end{figure}
First, we consider a standard \emph{Heisenberg Hamiltonian} with periodic boundary conditions, described by

\begin{equation}
\label{eq:heis}
    H_\mathrm{Heisenberg} = -\sum_{\langle i,j\rangle}\left(X_iX_j + Y_iY_j + Z_iZ_j\right) + 2 \sum_i X_i.
\end{equation}
This is a Hamiltonian that plays an important role in condensed matter physics, as a prototypical model capturing ferromagnetism.
Trotter-Suzuki and Childs-Wiebe formulas perform extremely well, as the Hamiltonian comprises many commuting terms. Since their performance guarantees rely on nested commutators \cite{childsTheoryTrotterError2020}, this behavior is to be expected and demonstrates the superior performance of these well studied formulas for lattice Hamiltonians, which is not reached by the newly proposed multi-product formulas. However, they are less optimal for Hamiltonians with fewer commuting terms.
Motivated by these findings, we now turn to investigate a Hamiltonian comprising mutually anti-commuting terms, defined as
\begin{equation}
\label{eq:anticom}
H_\mathrm{anti} = \sum_{i=0}^{7} Z^{\otimes i}\otimes (X+Y) + Z^{\otimes 8}.
\end{equation}
Anti-commuting terms play an important role when using quantum
simulation algorithms to investigate \emph{fermionic quantum models}.  Here, we find a notable and significant advantage of the newly proposed multi-product formulas over Trotter-Suzuki and Childs-Wiebe already for quite large parameters of $\tau = \sum|h_k|t$. Consequently, we expect our formulas to work well and exceed previous methods for fermionic systems, which, once transformed into qubit Hamiltonians using the Jordan-Wigner transformation, will have fewer commuting terms than standard lattice spin Hamiltonians.

This behavior is confirmed and further corroborated by our results of the
\emph{Sachdev–Ye–Kitaev (SYK)} 
model as defined in Ref.~\cite{babbushQuantumSimulationSachdevYeKitaev2019}, whose Hamiltonian is given by

\begin{equation}
\label{eq:syk}
H_\mathrm{SYK} = \frac{1}{4\cdot4!}\sum_{p,q,r,s=0}^{N-1} J_{p,q,r,s} \gamma_p\gamma_q\gamma_r\gamma_s,
\end{equation}
where $N$ is the number of Majorana fermion mode operators $\gamma_p$ and the $J_{p,q,r,s}$ are real-valued scalars drawn randomly from a normal distribution with variance $\sigma^2=3!/N^3$. This is an intricate 
local, but not geometrically local, model that is 
believed to provide insights into
instances of strongly correlated quantum 
materials. It is used in the study of scrambling
dynamics and has a close relation with discrete models that capture aspects of holography in the black hole context.

Again invoking the Jordan-Wigner transformation, $N$ Majorana fermion mode operators can be mapped onto $N/2$ qubits. For our simulations, we thus chose $N=10$ and $N=14$, leading to a five and seven-qubit model, respectively. Here, we again find a notable 
and in instances substantial advantage for 
large $\tau$ and even for large times $t$. This is a physically highly plausible and interesting model for which our new simulation methods fare well.

As a further fermionic system, we look at the spinful Hubbard model on two by two sites, defined in Ref.~\cite{mccleanOpenFermionElectronicStructure2020} as 

\begin{equation}
\label{eq:hubbard}
H_\mathrm{Hub} = - t \sum_{\langle i,j \rangle, \sigma}(a^\dagger_{i, \sigma} a_{j, \sigma} + a^\dagger_{j, \sigma} a_{i, \sigma}) + U \sum_{i} a^\dagger_{i, \uparrow} a_{i, \uparrow} a^\dagger_{i, \downarrow} a_{i, \downarrow} - \mu \sum_i \sum_{\sigma} a^\dagger_{i, \sigma} a_{i, \sigma}- h \sum_i (a^\dagger_{i, \uparrow} a_{i, \uparrow} -  a^\dagger_{i, \downarrow} a_{i, \downarrow}),
\end{equation}
with spin $\sigma$, tunneling amplitude $t=2$, Coulomb potential $U=2$, magnetic field $h=0.5$ and chemical potential $\mu=0.25$. Also, $a$ and $a^\dagger$ represent fermionic annihilation and creation operators. These are comparably small system sizes, but already
show the substantial potential of the proposed simulation method.

Finally, as a last family of examples, in order
to gauge the performance of our methods for larger system sizes, we investigate a system of 200 non-interacting (``free'') 
fermions with nearest neighbor interactions and periodic boundary conditions

\begin{equation}
\label{eq:fermions}
H_\mathrm{ff}(h) = \sum_{i,j}h_{i,j}a^\dagger_i a^\dagger_j,
\end{equation}
with $h_{i,j} = 1$ if the respective fermions are nearest neighbors and $h_{i,j} = 0$ otherwise.
Note that for gauging the performance for free fermions, we do not compare the time evolution operator $U(t)=\ee^{-\ii Ht}$ to its approximation, but the Greens function propagator $G(t)=\ee^{-\ii ht}$ to its approximation.

All comparisons are done in the randomized sampling framework for a Trotter-Suzuki order of $2\chi=4$, the number of repetitions $R=3$ and corresponding parameters for all other algorithms, ensuring an equal depth measured in the number of the required oracle calls. Although the straightforward comparison of their bounds in Fig.~\ref{fig:compare_bounds} suggests an advantage of our multi-product formulas at about $\tau=0.1$, we find that this is the case for much larger $\tau$ already on actual systems as shown in Fig.~\ref{fig:compare_numeric}.
Note also that since the Hamiltonians we consider are not necessarily geometrically local, known classical simulation techniques will be heavily challenged even for comparably short simulation times

While the performance of the simple multi-product formula and Trotter-Suzuki is also much better than their bounds indicate, we find that the presented bounds for our proposed multi-product formulas are comparably looser. It is also worth noting that the performance of the matching version is slightly better than that of the closed-form version, although it comes with the penalty of an additional, classical optimization loop.
We also find that the advantages of our multi-product formulas are visible even for $\tau>1$; in the case of the SYK model,
this holds even for actual simulation times $t>1$. Additionally, we find that this advantage can also be maintained for larger system sizes, as indicated by their performance on the model of free fermions.
The presented numerical studies, therefore, provide strong arguments for the functioning of our proposed multi-product formulas and their advantage in the presented regimes.
It is also important to note that the corresponding resolution factors required for the randomized sampling scheme of $\Xi^\mathrm{(cf)}\approx 1.36$ and $\Xi^\mathrm{(m)}\approx 1.22$ are significantly better than the $\Xi^\mathrm{(CW)}\approx 3.13$ of the multi-product formula proposed by Childs and Wiebe in Ref.~\cite{childsHamiltonianSimulationUsing} and would thus allow for at least four repetitions before they require the same overhead in circuit evaluations.

\section{Discussion and conclusion}

In this work, we have brought together two main ingredients of methods of quantum simulation. The results are notably more resource-efficient ways of performing short-time Hamiltonian simulation. These are on the one hand higher-order multi-product formulas \cite{childsHamiltonianSimulationUsing,lowWellconditionedMultiproductHamiltonian2019}, on the other an element which has long been underappreciated but recently been of high interest: the element of \emph{randomness} \cite{campbellShorterGateSequences2017b,campbellRandomCompilerFast2019a,childsFasterQuantumSimulation2019,ouyangCompilationStochasticHamiltonian2020,chenQuantumSimulationRandomized2020,chenQuantumSimulationRandomized2020}. Overcoming the prejudice that the time evolution has to be completed in each run of a compilation, we have
introduced a novel framework for implementing multi-product formulas, to estimate expectation values of time-evolved observables.

Concretely, we have proposed a randomized sampling approach that focuses on the time evolution of the observable, not the state.
When implementing multi-product formulas in a randomized fashion rather than via block encodings in the LCU framework, we can circumvent the need for additional amplitude amplification or post-selection. The results presented here have been obtained by only requiring access to a quantum-oracle machine that implements single-qubit state preparation, controlled time evolution, and quantum measurements. They are thus especially relevant in the regime of early quantum computers in which NISQ algorithms reach their limits but where full-fledged, digital, long-time evolution algorithms on fault-tolerant quantum computers are not yet available. Consequently, this work may be seen as targeting a regime in between the digital and analog setting, where we have some form of parametric control over a simulator system allowing us to compile the target time evolution with sequences of the simulator’s time evolutions \cite{Trotzky,parra-rodriguezDigitalanalogQuantumComputation2020}. This programmable regime then constitutes a departure from the analog setting with relatively little control over the simulator and is not as strict as digital simulation, where the control over the quantum system is strong enough to fashion its interactions into quantum gates.

Within this randomized sampling framework, we have proposed two new multi-product formulas. These schemes have been equipped with full rigorous performance guarantees. Furthermore, we have included a detailed estimation of the number of circuit evaluations that are required, a vital metric for randomized approaches. Comparing the error bounds of these newly introduced algorithms with Trotter-Suzuki product formulas and Childs and Wiebe's multi-product formula, we find that they outperform the latter for a fixed circuit depth in a practically relevant regime.

The use of multiple repetitions of short time evolutions by $t/r$ instead of a single, long time evolution by $t$, elementary to achieving long simulation times, comes with a penalty in the number of circuit evaluations which scales exponentially in $r$. However, the base of this exponential, the resolution factor in our case, is only slightly larger than one and could be optimized even further. Even for our toy examples, it was in the range of $1.2-1.4$ for the newly proposed multi-product formulas as compared to $3.13$ for Childs and Wiebe type multi-product formulas. Thus, at least a few repetitions are still in reach, rendering these results especially relevant for early quantum computers since we do not solely consider geometrically local Hamiltonians, for which even comparably short simulation times are a highly difficult task for known classical simulation techniques.

Benchmarking them on five different Hamiltonians, all of which are physically well motivated and each interesting in its own right, we have found that this advantage can be expected already at comparably large simulation times. While lattice Hamiltonians with many commuting terms are most likely best approximated using Trotter-Suzuki or Childs-Wiebe formulas, the newly proposed multi-product formulas show a clear advantage for fermionic Hamiltonians and those with a small number of commuting terms. This insight points to the direction that there might not be a universally optimal quantum simulation algorithm for digital quantum simulation. Instead, some algorithms could be better suited to capture the specifics of a given local Hamiltonian model. The downside of the proposed methods is that more measurements are required to reach the desired precision through the resolution factor. This resolution factor can be optimized using a classical black-box optimization, which is de facto a requirement for the functioning of the proposed multi-product formulas.  

The present work is essentially bridging the gap between analog and (perhaps error-corrected), fully digital quantum technology: Not only do we expect there to be other randomized sampling schemes in digital quantum simulation, but once one can replace the element of randomness with block encodings, one can switch from these expectation value based algorithm to algorithms based on quantum phase estimation. Overall, the method introduced gives rise to a less resource-demanding way of performing Hamiltonian simulation,
while also remaining conceptually and technologically
simpler than for instance qubitization
\cite{lowHamiltonianSimulationQubitization2019b},
bringing such ideas to an extent closer to the resources available in early quantum computers.

Looking ahead, it remains an open problem to relate the parameters $\boldsymbol{b}$ in Definition~\ref{def:new_multi_prod_base} to the resolution factor in a way that would eliminate the need for black-box optimization. So far, we can only connect the two quantities analytically, and that involves the complicated product with the Vandermonde matrix~\eqref{eq:vandermonde}. Alternatively, one could improve the optimization rather than replace it.  We have used only simple optimizers and loss functions, and expect possible improvements for more involved loss functions and optimization algorithms.
Furthermore, the presented constructions are just two of the plethora of new multi-product formulas that could be constructed with Definition~\ref{def:new_multi_prod_base} and might exhibit better error bounds and resolution.

\section{Acknowledgments}
This work has been supported by the DFG (CRC 183 project B01 and A03, EI 519/21-1). This work has also received funding from the European Unions Horizon 2020 research and innovation program under grant agreement No.\ 817482 (PASQuanS), specifically dedicated to programmable quantum simulators. It has also been supported by the BMWK (PlanQK and EniQmA), the BMBF (DAQC on notions of digital-analog quantum simulation and FermiQP on fermionic quantum processors), the Munich Quantum Valley (K8), and the Einstein Foundation (Einstein Research Unit on Quantum Devices) .
M.~K.~acknowledges funding from ARC Centre of Excellence for Quantum Computation and Communication Technology (CQC2T), project number CE170100012. 
The authors endorse Scientific CO$_2$nduct \cite{Sweke_2022} and provide a CO$_2$ emission table in the appendix.

\bibliographystyle{quantum}
\bibliography{literature}

\begin{thebibliography}{10}

\bibitem{Roadmap}
A.~Ac{\'i}n, I.~Bloch, H.~Buhrman, T.~Calarco, C.~Eichler, J.~Eisert,
  D.~Esteve, N.~Gisin, S.~J. Glaser, F.~Jelezko, S.~Kuhr, M.~Lewenstein, M.~F.
  Riedel, P.~O. Schmidt, R.~Thew, A.~Wallraff, I.~Walmsley, and F.~K. Wilhelm.
\newblock ``The quantum technologies roadmap: A {{European}} community view''.
\newblock \href{https://doi.org/10.1088/1367-2630/aad1ea}{New J. Phys. {\bf
  20}, 080201}~(2018).

\bibitem{lloydUniversalQuantumSimulators1996b}
S.~Lloyd.
\newblock ``Universal {{quantum simulators}}''.
\newblock \href{https://doi.org/10.1126/science.273.5278.1073}{Science {\bf
  273}, 1073--1078}~(1996).

\bibitem{aharonovAdiabaticQuantumState2003}
D.~Aharonov and A.~{Ta-Shma}.
\newblock ``Adiabatic {{Quantum State Generation}} and {{Statistical Zero
  Knowledge}}''.
\newblock
  \href{https://doi.org/10.48550/arXiv.quant-ph/0301023}{arXiv:quant-ph/0301023}.
  ~(2003).

\bibitem{berry2007efficient}
D.~W. Berry, G.~Ahokas, R.~Cleve, and B.~C. Sanders.
\newblock ``Efficient {{Quantum algorithms}} for {{simulating sparse
  Hamiltonians}}''.
\newblock \href{https://doi.org/10.1007/s00220-006-0150-x}{Commun. Math. Phys.
  {\bf 270}, 359--371}~(2007).

\bibitem{wiebe2010higher}
N.~Wiebe, D.~Berry, P.~H{\o}yer, and B.~C. Sanders.
\newblock ``Higher order decompositions of ordered operator exponentials''.
\newblock \href{https://doi.org/10.1088/1751-8113/43/6/065203}{J. Phys. A {\bf
  43}, 065203}~(2010).

\bibitem{wiebe2011simulating}
N.~Wiebe, D.~W. Berry, P.~H{\o}yer, and B.~C. Sanders.
\newblock ``Simulating quantum dynamics on a quantum computer''.
\newblock \href{https://doi.org/10.1088/1751-8113/44/44/445308}{J. Phys. A {\bf
  44}, 445308}~(2011).

\bibitem{poulin2011quantum}
D.~Poulin, A.~Qarry, R.~Somma, and F.~Verstraete.
\newblock ``Quantum {{simulation}} of {{time-dependent Hamiltonians}} and the
  {{convenient illusion}} of {{Hilbert space}}''.
\newblock \href{https://doi.org/10.1103/PhysRevLett.106.170501}{Phys. Rev.
  Lett. {\bf 106}, 170501}~(2011).

\bibitem{Kliesch-PRL-2011}
M.~Kliesch, T.~Barthel, C.~Gogolin, M.~Kastoryano, and J.~Eisert.
\newblock ``Dissipative {{quantum Church-Turing theorem}}''.
\newblock \href{https://doi.org/10.1103/PhysRevLett.107.120501}{Phys. Rev.
  Lett. {\bf 107}, 120501}~(2011).

\bibitem{sweke2016digital}
R.~Sweke, M.~Sanz, I.~Sinayskiy, F.~Petruccione, and E.~Solano.
\newblock ``Digital quantum simulation of many-body non-{{Markovian}}
  dynamics''.
\newblock \href{https://doi.org/10.1103/PhysRevA.94.022317}{Phys. Rev. A {\bf
  94}, 022317}~(2016).

\bibitem{childs2018toward}
A.~M. Childs, D.~Maslov, Y.~Nam, N.~J. Ross, and Y.~Su.
\newblock ``Toward the first quantum simulation with quantum speedup''.
\newblock \href{https://doi.org/10.1073/pnas.1801723115}{PNAS {\bf 115},
  9456--9461}~(2018).

\bibitem{childsTheoryTrotterError2020}
A.~M. Childs, Y.~Su, M.~C. Tran, N.~Wiebe, and S.~Zhu.
\newblock ``Theory of {{Trotter error}} with {{commutator scaling}}''.
\newblock \href{https://doi.org/10.1103/PhysRevX.11.011020}{Phys. Rev. X {\bf
  11}, 011020}~(2021).

\bibitem{childs2019nearly}
A.~M. Childs and Y.~Su.
\newblock ``Nearly {{optimal lattice simulation}} by {{product formulas}}''.
\newblock \href{https://doi.org/10.1103/PhysRevLett.123.050503}{Phys. Rev.
  Lett. {\bf 123}, 050503}~(2019).

\bibitem{childsHamiltonianSimulationUsing}
A.~M. Childs and N.~Wiebe.
\newblock ``Hamiltonian simulation using linear combinations of unitary
  operations''.
\newblock \href{https://doi.org/10.26421/QIC12.11-12-1}{Quant. Inf. Comp. {\bf
  12}, 901--924}~(2012).

\bibitem{lowWellconditionedMultiproductHamiltonian2019}
G.~H. Low, V.~Kliuchnikov, and N.~Wiebe.
\newblock ``Well-conditioned multiproduct {{Hamiltonian}} simulation''.
\newblock \href{https://doi.org/10.48550/arXiv.1907.11679}{arXiv:1907.11679}.
  ~(2019).

\bibitem{berry2015hamiltonian}
D.~W. Berry, A.~M. Childs, and R.~Kothari.
\newblock ``Hamiltonian simulation with nearly optimal dependence on all
  parameters''.
\newblock \href{https://doi.org/10.1109/focs.2015.54}{2015 IEEE 56th Annual
  Symposium on Foundations of Computer Science}~(2015).

\bibitem{berry2014exponential}
D.~W. Berry, A.~M. Childs, R.~Cleve, R.~Kothari, and R.~D. Somma.
\newblock ``Exponential improvement in precision for simulating sparse
  hamiltonians''.
\newblock \href{https://doi.org/10.1145/2591796.2591854}{Proceedings of the
  forty-sixth annual ACM symposium on Theory of computing}~(2014).

\bibitem{berry2015simulating}
D.~W. Berry, A.~M. Childs, R.~Cleve, R.~Kothari, and R.~D. Somma.
\newblock ``Simulating {{Hamiltonian dynamics}} with a {{truncated Taylor
  series}}''.
\newblock \href{https://doi.org/10.1103/PhysRevLett.114.090502}{Phys. Rev.
  Lett. {\bf 114}, 090502}~(2015).

\bibitem{lowHamiltonianSimulationQubitization2019b}
G.~H. Low and I.~L. Chuang.
\newblock ``Hamiltonian {{simulation}} by {{qubitization}}''.
\newblock \href{https://doi.org/10.22331/q-2019-07-12-163}{Quantum {\bf 3},
  163}~(2019).

\bibitem{endoHybridQuantumClassicalAlgorithms2021}
S.~Endo, Z.~Cai, S.~C. Benjamin, and X.~Yuan.
\newblock ``Hybrid {{quantum-classical algorithms}} and {{quantum error
  mitigation}}''.
\newblock \href{https://doi.org/10.7566/JPSJ.90.032001}{J. Phys. Soc. Jap. {\bf
  90}, 032001}~(2021).

\bibitem{campbellShorterGateSequences2017b}
E.~T. Campbell.
\newblock ``Shorter gate sequences for quantum computing by mixing unitaries''.
\newblock \href{https://doi.org/10.1103/PhysRevA.95.042306}{Phys. Rev. A {\bf
  95}, 042306}~(2017).

\bibitem{campbellRandomCompilerFast2019a}
E.~T. Campbell.
\newblock ``Random {{compiler}} for {{fast Hamiltonian simulation}}''.
\newblock \href{https://doi.org/10.1103/PhysRevLett.123.070503}{Phys. Rev.
  Lett. {\bf 123}, 070503}~(2019).

\bibitem{childsFasterQuantumSimulation2019}
A.~M. Childs, A.~Ostrander, and Y.~Su.
\newblock ``Faster quantum simulation by randomization''.
\newblock \href{https://doi.org/10.22331/q-2019-09-02-182}{Quantum {\bf 3},
  182}~(2019).

\bibitem{ouyangCompilationStochasticHamiltonian2020}
Y.~Ouyang, D.~R. White, and E.~T. Campbell.
\newblock ``Compilation by stochastic {{Hamiltonian}} sparsification''.
\newblock \href{https://doi.org/10.22331/q-2020-02-27-235}{Quantum {\bf 4},
  235}~(2020).

\bibitem{chenQuantumSimulationRandomized2020}
C.-F. Chen, H.-Y. Huang, R.~Kueng, and J.~A. Tropp.
\newblock ``Concentration for {{random product formulas}}''.
\newblock \href{https://doi.org/10.1103/PRXQuantum.2.040305}{PRX Quantum {\bf
  2}, 040305}~(2021).

\bibitem{preskillQuantumComputingNISQ2018a}
J.~Preskill.
\newblock ``Quantum {{computing}} in the {{NISQ}} era and beyond''.
\newblock \href{https://doi.org/10.22331/q-2018-08-06-79}{Quantum {\bf 2},
  79}~(2018).

\bibitem{suzukiGeneralTheoryFractal1991}
M.~Suzuki.
\newblock ``General theory of fractal path integrals with applications to
  many-body theories and statistical physics''.
\newblock \href{https://doi.org/10.1063/1.529425}{J. Math. Phys. {\bf 32},
  400--407}~(1991).

\bibitem{blanesExtrapolationSymplecticIntegrators1999}
S.~Blanes, F.~Casas, and J.~Ros.
\newblock ``Extrapolation of symplectic {{Integrators}}''.
\newblock \href{https://doi.org/10.1023/A:1008364504014}{Cel. Mech. Dyn. Astr.
  {\bf 75}, 149--161}~(1999).

\bibitem{chinMultiproductSplittingRungeKuttaNystrom2010}
S.~A. Chin.
\newblock ``Multi-product splitting and {{Runge-Kutta-Nystr\"om}}
  integrators''.
\newblock \href{https://doi.org/10.1007/s10569-010-9255-9}{Cel. Mech. Dyn.
  Astr. {\bf 106}, 391--406}~(2010).

\bibitem{yoshidaConstructionHigherOrder1990}
H.~Yoshida.
\newblock ``Construction of higher order symplectic integrators''.
\newblock \href{https://doi.org/10.1016/0375-9601(90)90092-3}{Physics Letters A
  {\bf 150}, 262--268}~(1990).

\bibitem{hoeffdingProbabilityInequalitiesSums1963}
W.~Hoeffding.
\newblock ``Probability {{inequalities}} for {{sums}} of {{bounded random
  variables}}''.
\newblock \href{https://doi.org/10.1080/01621459.1963.10500830}{J. Am. Stat.
  Ass. {\bf 58}, 13--30}~(1963).

\bibitem{shengSolvingLinearPartial1989}
Q.~Sheng.
\newblock ``Solving {{linear partial differential equations}} by {{exponential
  splitting}}''.
\newblock \href{https://doi.org/10.1093/imanum/9.2.199}{IMA Journal of
  Numerical Analysis {\bf 9}, 199--212}~(1989).

\bibitem{bespalova2020hamiltonian}
T.~A. Bespalova and O.~Kyriienko.
\newblock ``Hamiltonian {{operator approximation}} for {{energy measurement}}
  and {{ground-state preparation}}''.
\newblock \href{https://doi.org/10.1103/PRXQuantum.2.030318}{PRX Quantum {\bf
  2}, 030318}~(2021).

\bibitem{huangPredictingManyProperties2020a}
H.-Y. Huang, R.~Kueng, and J.~Preskill.
\newblock ``Predicting many properties of a quantum system from very few
  measurements''.
\newblock \href{https://doi.org/10.1038/s41567-020-0932-7}{Nature Phys. {\bf
  16}, 1050--1057}~(2020).

\bibitem{lecam1960asymptotic}
L.~Le~Cam.
\newblock ``Locally asymptotically normal families of distributions. {C}ertain
  approximations to families of distributions and their use in the theory of
  estimation and testing hypotheses''.
\newblock Univ. California Publ. Statist. {\bf 3}, 37--98~(1960).

\bibitem{Bazan}
F.~S.~V. Baz{\'a}n.
\newblock ``Conditioning of {{rectangular Vandermonde matrices}} with {{nodes}}
  in the {{unit disk}}''.
\newblock \href{https://doi.org/10.1137/S0895479898336021}{SIAM J. Mat. An.
  App. {\bf 21}, 679--693}~(2000).

\bibitem{el-mikkawyExplicitInverseGeneralized2003}
M.~E.~A. {El-Mikkawy}.
\newblock ``Explicit inverse of a generalized {{Vandermonde}} matrix''.
\newblock \href{https://doi.org/10.1016/S0096-3003(02)00609-4}{Appl. Math.
  Comp. {\bf 146}, 643--651}~(2003).

\bibitem{Knuth}
D.E. Knuth.
\newblock ``The art of computer programming: {{Fundamental}} algorithms''.
\newblock Number v. 1-2 in Addison-{{Wesley}} Series in Computer Science and
  Information Processing. {Addison-Wesley}. ~(1973).
\newblock subsequent edition.

\bibitem{babbushQuantumSimulationSachdevYeKitaev2019}
R.~Babbush, D.~W. Berry, and H.~Neven.
\newblock ``Quantum simulation of the {{Sachdev-Ye-Kitaev}} model by asymmetric
  qubitization''.
\newblock \href{https://doi.org/10.1103/PhysRevA.99.040301}{Phys. Rev. A {\bf
  99}, 040301}~(2019).

\bibitem{mccleanOpenFermionElectronicStructure2020}
J.~R. McClean, N.~C. Rubin, K.~J. Sung, I.~D. Kivlichan, X.~{Bonet-Monroig},
  Y.~Cao, C.~Dai, E.~S. Fried, C.~Gidney, B.~Gimby, P.~Gokhale, T.~H{\"a}ner,
  T.~Hardikar, V.~Havl{\'i}{\v c}ek, O.~Higgott, C.~Huang, J.~Izaac, Z.~Jiang,
  X.~Liu, S.~McArdle, M.~Neeley, T.~O'Brien, B.~O'Gorman, I.~Ozfidan, M.~D.
  Radin, J.~Romero, N.~P.~D. Sawaya, B.~Senjean, K.~Setia, S.~Sim, D.~S.
  Steiger, M.~Steudtner, Q.~Sun, W.~Sun, D.~Wang, F.~Zhang, and R.~Babbush.
\newblock ``{{OpenFermion}}: The electronic structure package for quantum
  computers''.
\newblock \href{https://doi.org/10.1088/2058-9565/ab8ebc}{Quant. Sc. Tech. {\bf
  5}, 034014}~(2020).

\bibitem{Trotzky}
S.~Trotzky, Y.-A. Chen, A.~Flesch, I.~P. McCulloch, U.~Schollw{\"o}ck,
  J.~Eisert, and I.~Bloch.
\newblock ``Probing the relaxation towards equilibrium in an isolated strongly
  correlated one-dimensional {{Bose}} gas''.
\newblock \href{https://doi.org/10.1038/nphys2232}{Nature Phys. {\bf 8},
  325--330}~(2012).

\bibitem{parra-rodriguezDigitalanalogQuantumComputation2020}
A.~{Parra-Rodriguez}, P.~Lougovski, L.~Lamata, E.~Solano, and M.~Sanz.
\newblock ``Digital-analog quantum computation''.
\newblock \href{https://doi.org/10.1103/PhysRevA.101.022305}{Phys. Rev. A {\bf
  101}, 022305}~(2020).

\bibitem{Sweke_2022}
R.~Sweke, P.~Boes, N.~Ng, C.~Sparaciari, J.~Eisert, and M.~Goihl.
\newblock ``Transparent reporting of research-related greenhouse gas emissions
  through the scientific {CO}2nduct initiative''.
\newblock \href{https://doi.org/10.1038/s42005-022-00930-2}{Communications
  Physics{\bf 5}}~(2022).

\end{thebibliography}

\begin{appendix}
\section{CO\texorpdfstring{$_2$}{2} emission table}
    \begin{center}
    \begin{tabular}[b]{l c}
    \hline
    \textbf{Numerical simulations} & \\
    \hline
    Total Kernel Hours [$\mathrm{h}$]& $\approx800$\\
    Thermal Design Power Per Kernel [$\mathrm{W}$]& 5.75\\
    Total Energy Consumption Simulations [$\mathrm{kWh}$] & 4.6\\
    Average Emission Of CO$_2$ In Germany [$\mathrm{kg/kWh}$]& 0.56\\
    Total CO$_2$-Emission For Numerical Simulations [$\mathrm{kg}$] & 2.6\\
    Were The Emissions Offset? & \textbf{Yes}\\
    \end{tabular}
    \end{center}
\end{appendix}
\end{document}